\newtheorem{theorem}{Theorem}
\newtheorem{corollary}[theorem]{Corollary}
\newtheorem{lemma}[theorem]{Lemma}
\newtheorem{remark}[theorem]{Remark}
\newtheorem{claim}[theorem]{Claim}
\newtheorem{fact}[theorem]{Fact}
\newcommand{\poly}{{\mathrm{poly}}}
\newcommand{\nnz}{{\mathrm{nnz}}}
\newenvironment{proof}{\noindent {\bf Proof:}\ }{\qed \par\vskip 4mm\par}
\newcommand{\REAL}{\ensuremath{\mathbb{R}}}
\newcommand{\cost}{\ensuremath{\text{cost}}}
\newcommand{\opt}{\ensuremath{\text{opt}}}
\newcommand{\qed}{\hfill \ensuremath{\Box}}
\newcommand{\dist}{\ensuremath{\text{dist}}}
\begin{document}
\thispagestyle{empty}
\title{Constant Size Coresets for $k$-Median Clustering}

\author{Christian Sohler\thanks{TU Dortmund, Germany, christian.sohler@tu-dortmund.de }
			  \and David P. Woodruff \thanks{Carnegie Mellon University, USA, dwoodruf@cs.cmu.edu}}

\date{}
\title{Strong Coresets for $k$-Median and Subspace Approximation: Goodbye Dimension}
\maketitle

\begin{abstract}
We obtain the first strong coresets for the $k$-median and subspace
approximation problems with sum of distances objective function, 
on $n$ points in $d$ dimensions, with a number of weighted points that is
independent of both $n$ and $d$; namely, our coresets have size 
$\poly(k/\epsilon)$. A strong coreset $(1+\epsilon)$-approximates 
the cost function for 
all possible sets of centers simultaneously.
We also give efficient $\nnz(A) + (n+d)\poly(k/\epsilon) + \exp(\poly(k/\epsilon))$
time algorithms for computing these coresets. 

We obtain the result by introducing a new dimensionality reduction technique for coresets
that significantly generalizes an earlier result of Feldman, Sohler and Schmidt \cite{FSS13} for 
squared Euclidean distances to sums of $p$-th powers of Euclidean distances for  
constant $p\ge1$.
\end{abstract}

\thispagestyle{empty}
\clearpage
\setcounter{page}{1}
\section{Introduction}
Coresets are a technique for data size reduction, which have been developed for a large family of problems in machine learning and statistics.
Given a set $P$ of $n$ points $p_1, \ldots, p_n$ each in $\mathbb{R}^d$, loosely speaking a {\it coreset} is a low-memory data structure $D$ which 
can be used in place of $P$ to approximate the cost of any query $Q$ on $P$. For example, in the Frobenius norm subspace approximation problem, one 
may be interested in computing an approximation to $\sum_{i=1}^n \|p_i - p_i P_V\|_2^2$, where $P_V$ is the orthogonal projection onto a $k$-dimensional 
subspace $V$ which corresponds to the query $Q$. As another important example, in the $k$-means problem one may be given a query $Q = \{q_1, \ldots, q_k\}$ of $k$ points, 
and one may be interested in computing an approximation to $\sum_{i=1}^n \|p_i - n(p_i, Q)\|_2^2$, where $n(p_i, Q)$ denotes the closest point in $Q$ to point $p_i$. In these examples, the notion of approximation is a $(1+\epsilon)$-relative error approximation, that is, a value $(1 \pm \epsilon) \sum_{i=1}^n \|p_i - p_i P_V\|_2^2$ for the subspace approximation problem, and a value $(1 \pm \epsilon) \sum_{i=1}^n \|p_i - n(p_i, Q)\|_2^2$ for the $k$-means problem. 

Often in these problems one seeks a {\it strong coreset}, which means that with high probability, the data structure $D$ should work {\it simultaneously} for all queries $Q$. That is, one may use random choices in the construction of $D$, but after forming $D$ it should be the case that $D$ can be used to provide a $(1+\epsilon)$-relative error approximation for every possible query simultaneously. An advantage of such a coreset is that for {\it any objective function} for which a table of $(1+\epsilon)$-approximate values to all possible queries can be used to provide a $(1+O(\epsilon))$-approximation to the objective function, one can throw away the original set of points and instead just retain the data structure $D$. For example, note that the above coreset for subspace approximation contains enough information to approximately solve principal component analysis (PCA), since if one finds the query $k$-dimensional subspace with minimum approximate value, this provides a $k$-dimensional subspace providing a $(1+\epsilon)$-approximation to the space spanned by the top $k$ principal components. However, the above coreset for subspace approximation can also be used to solve the $k$-means problem, since the latter can be rewritten as a {\it constrained} low rank approximation problem \cite{BZMD15,CEMMP15}. Thus, given that a strong coreset approximately preserves the cost of any query, it can be used in place of the original point set in any application which depends only on the answers to the queries. Note that if the coreset were instead to only approximately preserve the cost of any fixed query with high probability, then it might not be possible to solve the problem using the coreset since one may need to adaptively query the data structure, and outputs to successive queries may no longer be correct since the inputs depend on outputs to previous queries. 

Another advantage of a coreset is if it small, then it leads to considerable efficiency gains. For example, in distributed settings, each machine which has a subset of input points can compress its input points to a coreset, and then communicate the coreset to a central coordinator. The central coordinator, who often has more resources available, can then combine the coresets and use them to optimize the desired function. As communication is a bottleneck, a small coreset gives rise to more efficient protocols. Similarly, when processing a data stream, a common technique is the merge-and-reduce framework, in which one partitions the stream into chunks, computes a coreset on each chunk, and merges the coresets in a binary tree like structure as one processes successive chunks of the data stream. A small coreset thus leads to small space streaming algorithms. 

A long line of work has focused on developing strong coresets for both the subspace approximation problem \cite{DRVW06,DV07,SV07,FMSW10,FL11,VX12,FSS13} and the $k$-means problem \cite{BHI02,HM04,FS05,FS06,HK07,C09,LS10,FL11,FS12,FSS13}. Prior to the work of \cite{FSS13}, all previous coresets stored a weighted set of points, and the query just consisted of evaluating the same objective function on these weighted points. Moreover all such works required storing a number of points that was at least $d$, and an important question was to obtain coresets with a number of points {\it independent} of $d$. In \cite{FSS13}, by taking the top 
$O(k/\epsilon)$ principal components of the input points, arranged as an $n \times d$ matrix $A$, it was shown how to obtain the first strong coresets for the subspace approximation problem with a number of points independent of $d$ and $n$, namely, the authors achieved a coreset size of $O(k/\epsilon)$ points. An important idea
to obtain this result was that the cost of projecting the points on the first $O(k/\epsilon)$ principal components is approximately present for every 
candidate subspace and therefore can be dealt with as an additive constant. The authors also extend this result to the $k$-means problem by proving that the
projection on the first $O(k/\epsilon^2)$ principal components together with an appropriate constant will provide a coreset (of linear size but smaller dimension)
for the $k$-means problem. Combining this with existing constructions they achieved a coreset size of $\poly(k/\epsilon)$ points.

The $O(k/\epsilon^2)$ bound for the $k$-means problem was improved in \cite{CEMMP15} by using the fact that the $k$-means problem can be viewed as a constrained subspace approximation problem. In \cite{CEMMP15} the authors also find such a coreset in $\nnz(A)$ time, where $\nnz(A)$ is the number of non-zero entries of $A$. 

A major open question was if one could obtain strong coresets {\it independent of $d$ (and $n$)} for {\it $k$-median} and the subspace approximation problem with {\it sum of distances} $\sum_{i=1}^n \|p_i - p_iP_V\|_2$, as opposed to the sum of squares of distances. Unlike the $k$-means and sum of squares objective for subspace approximation, the $k$-median and sum of distances measures are much less amenable to algebraic manipulation; indeed there is no singular value decomposition (SVD) which was the driving force behind previous results. Notably, this version of the subspace problem is NP-hard \cite{cw15}, unlike minimizing the sum of squares. 

\subsection{Our Contributions}
Our main result is the construction of the first strong coresets independent of the dimension $d$ and number $n$ of input points for the $k$-median problem, as well as for the subspace approximation problem with sum of distances $\sum_{i=1}^n \|p_i -p_i P_V\|_2$. Our coresets have size $\poly(k/\epsilon)$ for both problems, and consist of a weighted set of points with a small twist. We add a single extra dimension to each of our points! We explain this more below. 

Our main new technique is a dimensionality reduction that generalizes a result of \cite{FSS13} for sum of squared distances to $p$-th powers
of distances for any constant $p \ge 1$. We also show how to build a strong coreset for subspace approximation
with $p$-th powers of Euclidean distance cost measure, for constant $p \ge 1$. 
Finally, we show how to find such coresets in time $\tilde{O}(\nnz(A) + (n+d)\poly(k/\epsilon)) + \exp(\poly(k/\epsilon))$ 
for the $k$-median problem and for the subspace approximation problem with
$p \in [1, 2)$, and in $\nnz(A)\poly(k/\epsilon) +(n+d) \poly(k/\epsilon) + \exp(\poly(k/\epsilon))$ time for the subspace approximation problem with $p > 2$. 

\subsubsection{Dimensionality Reduction}
We start by outlining our dimensionality reduction technique for the sum of distances objective function. A natural approach to try is to find a low dimensional subspace $S$ of $\mathbb{R}^d$ so that for any rank-$k$ subspace $V$, $\|A-AP_V\|_{1,2} \approx \|B-BP_V\|_{1,2} + \|A-B\|_{1,2}$, where for a matrix $C$, $\|C\|_{1,2}$ denotes the sum of Euclidean norms of rows of $C$, and here $B$ is the projection of the rows of $A$ (our initial points) onto the subspace $S$. Indeed, this is exactly the approach taken by \cite{FSS13,CEMMP15} for the subspace approximation problem with sum of squares of distances, where among other constructions, $S$ can be chosen to be the span of the top $O(k/\epsilon)$ singular vectors of $A$. It can be shown that the sum of squared distances to any object that is contained in a $k$-dimensional subspace is roughly the
projection cost onto the optimal $O(k/\epsilon)$-dimensional subspace plus the cost of the projected points. One way to think of this approach is to split
the cost into a structured part (the low dimensional point set) and a ``pseudorandom'' part (the projection cost), where the pseudorandom part essentially
acts like a random point set, as its cost will occur for any $k$-dimensional object, while the structured part is the one that can differ. 
What significantly helps in the case of squared distances is the Pythagorean theorem, which often allows to easily express distances as the sum of 
``independent'' distances. For example, if our object is contained in the optimal $O(k/\epsilon)$-dimensional subspace, then the cost of each point is
the squared distance of the projection plus the squared distance from the projected point to the object. 

Unfortunately, we do not have such a simple formula for exponents other than $2$. For the sum of distances one can show that an analogous approach does not work.
Consider a set of $n$ points in $\REAL^d$ where each coordinate of each point is drawn independently from a Gaussian distribution with expectation $0$ and variance $1/d$, i.e., the expected squared length of each point is $1$. We will assume that $d$ is large, which implies that the squared length is sharply concentrated and 
the expected length of the vector is close to $1$. Assume now that similarly to the case of squared distances we project our input point set to a low dimensional
subspace that minimizes the sum of squared projection lengths and we would like to use the projected point set together with the sum of projection lengths as a coreset
(where the sum is used as an additive constant in the costs). We will now argue that this cannot work for sufficiently large $n$ and $d$. Assume that the dimension
of the low dimensional subspace is $\ell$, a value independent of $n$ and $d$. In order to understand the properties of the optimal subspace, we first consider an
arbitrary fixed subspace of dimension $\ell$. If we now consider a random vector $x=(x_1,\dots,x_d)$ where each $x_i$ is chosen from the Gaussian distribution 
as described above, we notice that since the Gaussian distribution is invariant under rotation, that the expected squared length of the projected
point is equal to the expected squared length of the random vector $x'=(x_1,\dots,x_{\ell}, 0,\dots, 0)$, which is $\ell/d$  and the expected squared length of 
the projection is $1-\ell/d$. For sufficiently large $d$ this length approaches $0$ and the expected squared length of the projection approaches $1$.
Thus, for a fixed subspace, the expected sum of squared distances is $n$ and for $n\rightarrow \infty$ we get that the sum of squared projection lengths is sharply concentrated. 
Using a union bound over a net of all subspaces we conclude that every $\ell$-dimensional subspace will have cost roughly $n$.

Now consider the cost of an arbitrary point $q$ at distance $1$ from the origin. The expected distance of an input point to this point is
roughly $\sqrt{2}$ and so the sum of distances will approach $\sqrt{2}n$ as $n \rightarrow \infty$. Now recall that the length of the projection
of the input points goes to $0$. Thus their distance to $q$ will be roughly $1$. Thus, the sum of distances of the projected points is roughly $n$ plus the projection cost, which is roughly $n$, and thus will give an estimate of $2n$, which is not a $(1+\epsilon)$-approximation. Hence, we cannot simply work with a single additive weight as in the case of squared distances.

Instead, we proceed as follows. We first describe the existence of a coreset and then how to find
it efficiently. We start with a $k$-dimensional subspace $S$ of $\mathbb{R}^d$ for which 
$\|A(I-P_S)\|_{1,2} = \min_{\textrm{rank-}k \textrm{ subspaces S}}\|A(I-P_{S})\|_{1,2} = \opt$, where $P_S$ denotes the 
orthogonal projection onto $S$. We iteratively augment $S$ by $k$-dimensional subspaces until the cost no longer drops
by $\epsilon^2 \opt$. That is, in the first step, we try to find a $2k$-dimensional subspace $S'$
containing $S$ for which $\|A(I-P_{S'})\|_{1,2} \leq \opt - \epsilon^2 \opt$. We then replace $S$ with $S'$.
In the second step, we try to find a $3k$-dimensional subspace $S'$ containing $S$ for which
$\|A(I-P_{S'})\|_{1,2} \leq \opt' - \epsilon^2 \opt$, 
where $\|A(I-P_{S})\|_{1,2} = \opt' \leq \opt - \epsilon^2 \opt$, etc. This process repeats for at most 
$\epsilon^{-2}$ steps, at which point we have an at most $k/\epsilon^2$-dimensional subspace $S$ for which for
any $k$-dimensional subspace $V$, $\|A(I-P_{V \cup S})\|_{1,2} \leq (1+\epsilon)\|A(I-P_S)\|_{1,2}$. The latter property 
can be shown to imply that $\|AP_{V \cup S} - AP_{S}\|_{1,2} \leq \epsilon \cdot \opt$, that is, the projections of the 
$n$ points onto $S$ are close to the projections of the $n$ points onto $V \cup S$, {\it for any} $V$. See 
Lemma \ref{lem:proj}. 

Next, since we can ``move'' each of the rows of $AP_{V \cup S}$ to the corresponding
rows in $AP_S$ by paying a total sum of distances cost of $\epsilon \cdot \opt$, it follows by the triangle inequality that
for any set $C$ of points that is contained in a $k$-dimensional subspace $V$, the sum of distances from the rows of $AP_S$ 
to their corresponding closest points in $C$ is within $\epsilon \cdot \opt$ of the sum of distances from the rows of $AP_{V \cup S}$ 
to their corresponding closest points in $C$. 

Now we want to replace our original points (the rows of $A$) with their projections
onto $S$, namely, replace $A$ with $AP_S$. Although this step by itself does not reduce the number $n$ of points,
each of the $n$ points after projection lives in a much lower $k/\epsilon^2$ dimensional subspace (rather than the
initial space which has dimension $d$), and we will then be able to apply coreset construction techniques which
depend on this much smaller dimension. For any set $C$ of points contained in a $k$-dimensional subspace $V$, by the Pythagorean theorem 
we can write the distance of a row $p$  
of $A$ to $C$ as $\sqrt{a^2 + b^2}$, where $a$ is the distance of $p$ to $V \cup S$, and $b$ is the distance
of the projection of $p$ onto $V \cup S$, to $C$. We instead try to approximate $\sqrt{a^2+b^2}$ by $\sqrt{f^2+g^2}$,
where $f$ is the distance of $p$ to $S$ and $g$ is the distance of the projection of $p$ onto $S$, to $C$. We observe 
in Lemma \ref{lem:numbers} that $|\sqrt{a^2+b^2} - \sqrt{f^2+g^2}| \leq |a-f| + |b-g|$, and we know that the
average values (over the $n$ points) of $|a-f|$ and $|b-g|$ are small by Lemma \ref{lem:proj} combined with the triangle inequality.
%

Note that the value $f$, which is different for each of the $n$ rows of $A$, 
{\it does not depend} on $V$ or $C$, whereas the value $g$ is exactly the distance of the row of 
$AP_S$ to $C$. If we were simply to define $B = AP_S$, then $\|B - B_C\|_{1,2}$, where the i-$th$ row of $B_C$ contains
the closest point (of the closure) of $C$ to the corresponding row of $B$, would fail to capture
the distances of the $n$ rows of $A$ to $S$. Further, unlike for
the $\|\cdot\|_{2,2}$ norm, we cannot add a single number $\|A(I-P_S)\|_{2,2}$ to account for this,
which is a technique used in \cite{FSS13,CEMMP15}; 
this is precisely the difficulty of the $\|\cdot\|_{1,2}$ norm that we must deal with. Instead, a crucial
idea is to {\it append one additional coordinate} to each row of $B$, where in the $i$-th row we append
$\|A_{i*}(I-P_S)\|_2$, where $P_S$ is the orthogonal projection onto $S$. Then, to compute the distance to a 
$k$-dimensional subspace $V$, instead of approximating $\|A-A_C\|_{1,2}$ by $\|B - B_C\|_{1,2}$, where the $i$-th rows 
of $A_C$ and $B_C$ contain the closest points (of the closure) of $C$ to the corresponding row of $A$ and the first $d$ coordinates of the 
corresponding row of $B$, respectively, 
we approximate $\|A-AP_V\|_{1,2}$ by $\|B - B_C I^T\|_{1,2}$, where $B_C I^T$ is the matrix which appends an all $0$ column to $B_C$.
 Thus, the norm of the $i$-th row of $B-B_C I^T$ is $\sqrt{f^2+g^2}$, where $f$ is the distance of $A_i$ to $S$,
captured by the $(d+1)$-st coordinate of $B$, and $g$ is the distance of $A_iP_S$ to $C$. Thus, we have ``encoded'' the distances
of the rows of $A$ to $AP_S$ in the coreset this way. Note that this appended additional coordinate cannot be taken out of each
row and combined into a single number, as in \cite{FSS13,CEMMP15}, because for each row, 
its square is added to the squared distance of a point to its projection onto $S$, and then
a square root is taken, so it occurs ``under the square root'' in the distance computations. 

{\it Optimizing the Running Time.} We next implement the steps above in $\tilde{O}(\nnz(A)) + (n+d)\poly(k/\epsilon) + \exp(\poly(k/\epsilon))$ time. 
We first show how to reformulate our algorithm in Section \ref{sec:alternative} so that it suffices to run any algorithm for finding an $ik$-dimensional subspace $S$ of $\mathbb{R}^d$ which a $(1+\epsilon^2/2)$-approximate $ik$-dimensional subspace with respect to the $\|\cdot\|_{1,2}$ norm, for a random integer $i$ in $\{1, 2, \ldots, 10/\epsilon^2\}$. It is known how to find such a subspace in $\nnz(A) + (n+d)\poly(k/\epsilon) + \exp(\poly(k/\epsilon))$ time \cite{cw15}. This is done in our {\sc DimensionalityReductionII} algorithm. 

After finding such a subspace $S$, which is an $ik$-dimensional subspace of $\mathbb{R}^d$, we need the distance of each row of $A$ to $S$. To do so, we set up $n$ regression problems, the $i$-th being:
$\min_x \|A_i - xV^T\|_2$, where $P_S = VV^T$ and $V^T$ is an orthonormal basis for $S$, which can be computed in $d \cdot \poly(k/\epsilon)$ time. Using input sparsity time algorithms for regression \cite{cw13,mm13,nn13}, if we choose a CountSketch matrix $S$ with 
$\poly(k/\epsilon)$ rows, then with probability at least $9/10$, 
$\|A_i S - xV^TS\|_2 = (1 \pm \epsilon)\|A_i - xV^T\|_2$ for all $x$. We can compute $AS$ in $\nnz(A)$ time and $V^TS$ in $d \cdot \poly(k/\epsilon)$ time, at which point we can solve the $n$ regression problems in $n \cdot \poly(k/\epsilon)$ time, so $\nnz(A) + (n+d)\poly(k/\epsilon)$ total time. We repeat the entire procedure $O(\log n)$ times, and take the median of our $O(\log n)$ estimates, for each $i \in \{1, 2, \ldots, n\}$, giving us $O(\nnz(A) \log n + (n+d)\poly(k/\epsilon) \log n) = \tilde{O}(\nnz(A) + (n+d) \poly(k/\epsilon))$ total time to obtain $(1+\epsilon)$-approximations of the distances for each row of $A$ to $S$. Here, 
for a function $f$, $\tilde{O}(f) = f \cdot \poly(\log f)$. We note that such approximations, when used as the last coordinates of the rows of matrix $B$, only change $\|B-BIPI^T\|_{1,2}$ by a $(1+\epsilon)$-factor, where $P$ is any rank-$k$ orthogonal projection matrix. See Lemma \ref{lem:regression2}. 

We also need to project each of the rows of $A$ onto $S$, which would take
more than $\nnz(A)$ time; indeed, just writing down such projections could take $\Omega(nd)$ time. Fortunately, for our coreset
constructions we describe next, we never need to explicitly perform this projection. We first show how to 
avoid this for the subspace approximation problem.

\subsubsection{Coreset Construction for Subspace Approximation}
We first explain the construction for $p = 1$, then how to optimize the running time. At this point we have a rank-$\poly(k/\epsilon)$ 
matrix $B$ (or more precisely a rank-$(ik+1)$ matrix $B$ for some $i = O(1/\epsilon^2)$) 
for which 
for all $k$-dimensional orthogonal projection matrices $P$, 
$\big| \|A-AP\|_{1,2} - \|B-BIPI^T\|_{1,2} \big| \le \epsilon \|A-AP\|_{1,2}.$ 
Unfortunately the dimension of $B$ is still $n \times d$. Viewing the rows as points in $\mathbb{R}^d$, we would like to 
reduce the number $n$ of points. We first observe that since $\ell_2$ embeds linearly into $\ell_1$ with distortion at 
most $(1+\epsilon)$ via multiplication by a Gaussian matrix $G$ (this is a special case Dvoretsky's theorem), we have
$\|BG-BIPI^TG\|_{1,1} = (1 \pm \epsilon) \|B-BIPI^T\|_{1,2}$ for all $k$-dimensional orthogonal projection matrices $P$. Here $\|\cdot\|_{1,1}$
denotes the entrywise $1$-norm of a matrix. The intuition here is that $G$ maps $\mathbb{R}^d$ to $\mathbb{R}^{d'}$ for a $d' > d$
for which the image of $\mathbb{R}^d$ in $\mathbb{R}^{d'}$ consists of only flat vectors, so the $\ell_1$-norm of every vector
coincides with its $\ell_2$-norm, up to $(1 \pm \epsilon)$-factor, after scaling by the square root of the dimension. Here, 
$d' = O(d (\log(1/\epsilon))/\epsilon^2)$, but it does not matter for our purposes since we will never actually
instantiate $G$.

Next, we appeal to the Lewis weight sampling result of \cite{cp15}, which says that one can find a sampling and rescaling matrix
$T$ (a matrix which just samples rows and rescales them by positive weights) 
with $O(r \log(r)/\epsilon^2)$ rows for which for any rank-$r$ space $C$, $\|TCx\|_1 = (1\pm \epsilon)\|Cx\|_1$ simultaneously
for all $x$. Noting that for every $P$, each column of $BG-BIPI^TG$ is in the column span of $B$, which is a $\poly(k/\epsilon)$-dimensional subspace, we have $\|TBG-TBIPI^TG\|_{1,1} = (1 \pm \epsilon)\|BG-BIPI^TG\|_{1,1}$ for all rank-$k$ orthogonal projection matrices $P$, where
$T$ has $\poly(k/\epsilon)$ rows. Finally, noting that the {\it rows} of $TBG-TBIPI^TG$ are still in the row span of $G$, we can
apply Dvoretsky's theorem one more time to conclude that $\|TBG-TBIPI^TG\|_{1,1} = (1 \pm \epsilon)\|TB-TBIPI^T\|_{1,2}$ for all
rank-$k$ orthogonal projection matrices $P$. Stringing the inequalities together, we obtain $\|TB-TBIPI^T\|_{1,2} = (1 \pm \Theta(\epsilon))\|B-BIPI^T\|_{1,2}$ for all rank-$k$ orthogonal projection matrices. Note that we never need to multiply by $G$. Rather $G$ is a tool in the analysis
which shows the sampling procedure of \cite{cp15} works for sums of Euclidean norms. 

Consequently, our strong coreset consists of the rows of $TB$, so $\poly(k/\epsilon)$ points in $\mathbb{R}^d$, i.e., the rows
of $T$. These are the analogue of the $k/\epsilon$ right singular vectors of \cite{FSS13,CEMMP15} used to obtain a strong coreset for
the $\|\cdot\|_{2,2}$ error measure. We discuss a similar argument for $p$-th powers below.

{\it Optimizing the Running Time.} We now obtain $\tilde{O}(\nnz(A) + (n+d)\poly(k/\epsilon)) + \exp(\poly(k/\epsilon))$ running time. 
In this running time we can find the $\poly(k/\epsilon)$-dimensional subspace $S$ for which 
$B = [AP_S, v]$, where $P_S$ is the orthogonal projection
onto the $\poly(k/\epsilon)$-dimensional subspace found by our {\sc DimensionalityReductionII} algorithm, and 
$v_i = (1 \pm \epsilon) \|A_{i*} - A'_{i*}\|_{2}$,
where $A' = AP_S$ for $i = 1, 2, \ldots, n$.

As above, let $V \in \mathbb{R}^{d \times \poly(k/\epsilon)}$ have columns which form an orthonormal basis for the column span of $S$. 
To find the sampling and rescaling matrix $T$, the procedure in Theorem 1.1 of \cite{cp15} takes time equal to that of 
$O(\log \log n)$ invocations of
constant factor $\ell_2$-leverage score approximations of matrices of the form $W A V$, where $W$ is a non-negative 
diagonal matrix. We use the input sparsity time approximate leverage score samplers of \cite{cw13,mm13,nn13}, which compute
$S W A$ for a CountSketch matrix $S$ with $\poly(k/\epsilon)$ rows. This procedure computes $SWA$ in $O(\nnz(A))$ time,
then computes $SWAV$ in $d \cdot \poly(k/\epsilon)$ time, then a QR factorization in $\poly(k/\epsilon)$ time, then 
$(WAV)(R^{-1}G)$ for a Gaussian matrix $G$ with $O(\log n)$ columns. The row norms of $WAV (R^{-1}G)$ can be computed
in $\nnz(A) \log n + d \cdot \poly(k/\epsilon) \log n$ time using that $G$ has only $O(\log n)$ columns. Since the procedure
reduces to $O(\log \log n)$ invocations of this, in total this gives $\tilde{O}(\nnz(A) + (n+d)\poly(k/\epsilon))$ time to
find the matrix $T$. Finally, $T$ selects $\poly(k/\epsilon)$ rows of $A$, and for each we compute its projection onto $S$,
taking $d \cdot \poly(k/\epsilon)$ time in total. We also output the corresponding entry of $v$. We thus obtain our coreset
$TB$ in $\tilde{O}(\nnz(A) + (n+d)\poly(k/\epsilon))$ total time. 

Our coresets for subspace approximation with sum of $p$-th powers error measure follows via similar techniques. The
running time is slightly worse for $p > 2$ due to the fact that we can only implement our {\sc DimensionalityReductionII}
algorithm in $\tilde{O}(\nnz(A) + (n+d)\poly(k/\epsilon)) + \exp(\poly(k/\epsilon))$ time if $p \in [1, 2)$. For
$p > 2$ we use a slower algorithm running in $O(\nnz(A)\poly(k/\epsilon) + \exp(\poly(k/\epsilon))$ due to \cite{DV07} (they state their algorithm as $O(nd\poly(k/\epsilon) + \exp(\poly(k/\epsilon)))$ but if $A$ is sparse, the $nd \cdot \poly(k/\epsilon)$ can be replaced with an $\nnz(A) \cdot \poly(k/\epsilon)$ given that their algorithm just requires computing projections). 

\subsubsection{Coreset Construction for $k$-Median}

To obtain a coreset for $k$-median, we first apply our dimensionality reduction to get a matrix $B$ such that 
for every set of $k$-centers $C$ we have $\big| \|A-A_C\|_{1,2} - \|B-B_C I^T\|_{1,2} \big| \le \epsilon \cdot \|A-A_C\|_{1,2}$,
where $A_C$ and $B_C$ denote the matrices that contain in the $i$-th row the closest center of $C$ to the $i$-th row of $A$ and
$B$, respectively. We note that $B$ can be viewed as a point set in $O(k/\epsilon^2)$ dimensions. We can then use an arbitrary
coreset construction for this low dimensional point set where we append $k$ arbitrary dimensions to the space. Thus, the effect
of the construction will be to replace the $d$ in a coreset construction by $O(k/\epsilon^2)$.
We claim that 
a coreset for this enlarged space is also a coreset for the $d$-dimensional space. The reason is that any set of $k$-centers
in the $d$-dimensional space is either in the span of $B$ (in which case the coreset guarantee holds) or there is an orthogonal
transformation that does not change $B$ and maps the remaining centers to the $k$ added dimensions. This implies that
the coreset property holds for the full space. Thus, the cost of the coreset approximates the cost of $B$ upto a factor of
$1\pm\epsilon$. Combining this with the error bound of $\big| \|A-A_C\|_{1,2} - \|B-B_C\|_{1,2} \big| \le \epsilon \cdot \|A-A_C\|_{1,2}$
gives that the resulting set will be a $1+O(\epsilon)$ coreset and the result follows by rescaling $\epsilon$ by a constant.
Notice that the guarantee the coreset provides is slightly stronger than what we need as our centers 
will always have the last (special) coordinate equal to $0$. 

Plugging in the $k$-median coresets of \cite{FL11} or \cite{BFL16}, which are both of size $O(\frac{dk\log k}{\epsilon^2})$ 
(the first one has negative weights, which may be undesirable in some situations), we obtain a coreset of size $O(\frac{k^2\log k}{\epsilon^4})$.

In order to get a running time of $O(\nnz(A) + (n+d) \poly(k/\epsilon) + \exp(\poly(k/\epsilon)))$ we approximate the matrix $B$ of projections with a factored low rank matrix of approximate projections, see Lemma \ref{lemma:tildeB}. 


\subsubsection{Outline} In Section \ref{sec:prelim}, we give preliminaries.
In Section \ref{sec:dim}, we provide our main dimensionality reduction
technique. In Section \ref{sec:subspace}, we obtain our coresets
for subspace approximation. Finally, in Section \ref{sec:kmedian}, we obtain
our coreset for $k$-median. 

\section{Preliminaries}\label{sec:prelim}

We use $A \in \REAL^{n\times d}$ to denote a point set of $n$ points in $d$ dimensions (the rows of $A$). 
$A_{i*}$ denotes the $i$-th row of $A$ and $A_{*j}$ denotes the $j$-th column. For a matrix $A \in \REAL^{n\times d}$ we use $\|A\|_{p,2} = 
(\sum_{i=1}^n \|A_{i*}\|_2^p)^{1/p}$. In particular, we have $\|A\|_{2,2}^2 = \|A\|_F^2$, where $\|A\|_F$ denotes the Frobenius norm of
$A$. For a subspace $S$ we use $\cost_p(A,S)$ to denote the sum of $p$-th powers of the $l_2$-distances from the rows of $A$ to $S$.
For a non-empty set of points $C \subseteq \REAL^d$ we define $\dist(p,C) = \inf_{q\in C} \|p-q\|_2$.

We start with a few claims that will be useful to deal with norms and powers of norms. These are elementary properties
about numbers and we defer the proofs to the Appendix.

\begin{claim}
\label{claim:one}
Let $a,b,c \ge 0$ such that $a^2=b^2-c^2$. For $p \ge 2$ we have $a^p \le b^p - c^p$.
\end{claim}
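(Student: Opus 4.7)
The plan is to reduce to the case $b=1$ by homogeneity and then use that a number in $[0,1]$ raised to a power $p\ge 2$ is at most its square. First, dispose of the trivial case $b=0$: then $a^2=b^2-c^2\ge 0$ forces $c=0$ and $a=0$, so both sides of the inequality are $0$. Hence assume $b>0$ and set $x=a/b$, $y=c/b$, which are nonnegative reals satisfying $x^2+y^2=1$. In particular $x,y\in[0,1]$.

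Dividing the target inequality $a^p \le b^p - c^p$ by $b^p$, it becomes $x^p+y^p \le 1$, i.e. $x^p+y^p\le x^2+y^2$. The key observation is that for any $t\in[0,1]$ and any $p\ge 2$ we have $t^p = t^2\cdot t^{p-2} \le t^2$, since $t^{p-2}\le 1$. Applying this to $t=x$ and $t=y$ and summing yields $x^p+y^p \le x^2+y^2 = 1$, which is precisely the claim after rescaling back by $b^p$.

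I do not expect any real obstacle here; the only thing to be careful about is the degenerate case $b=0$ (handled above) and noting that the hypothesis $a^2=b^2-c^2\ge 0$ implicitly gives $c\le b$, which is what allows the normalization $y=c/b\le 1$. The proof will be a couple of lines.
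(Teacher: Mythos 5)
Your proof is correct. It differs slightly in packaging from the paper's: the paper argues in one line that $a^p + c^p = (a^2)^{p/2} + (c^2)^{p/2} \le (a^2+c^2)^{p/2} = b^p$, invoking the superadditivity of $t\mapsto t^{p/2}$ for $p/2\ge 1$, whereas you first normalize by $b$ to put yourself on the unit circle and then use only the utterly elementary fact that $t^p\le t^2$ for $t\in[0,1]$ and $p\ge 2$. After normalization the two inequalities are the same statement, so the arguments are morally equivalent; your version makes the reason for the inequality a bit more transparent at the cost of an extra normalization step and a case check for $b=0$, while the paper's is shorter but leans on a (standard, unstated) superadditivity fact. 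Either is perfectly acceptable here.
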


\begin{claim}
\label{claim:two}
Let $a,b,c \ge 0$ such that $a^2=b^2-c^2$, $a^p \ge \epsilon b^p$ and $b^p \ge c^p$. Let $1\le p \le 2$ and $1\ge \epsilon >0$.
Then $a^p \le 10 \cdot  \epsilon^{\frac{p-2}{p}} \cdot (b^p-c^p)$.
\end{claim}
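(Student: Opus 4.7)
The plan is to reduce the inequality to a one-variable estimate by normalizing. Since $a^2 = b^2 - c^2 \le b^2$, we have $a \le b$, so setting $t = a/b \in [0,1]$ and using $a^2 = b^2 - c^2$, we get $c/b = \sqrt{1-t^2}$. The hypothesis $a^p \ge \epsilon b^p$ translates to $t \ge \epsilon^{1/p}$, while $b^p \ge c^p$ is automatic from $c \le b$. After dividing through by $b^p$, the claim becomes
\[
 t^p \;\le\; 10\,\epsilon^{(p-2)/p}\bigl(1 - (1-t^2)^{p/2}\bigr).
\]

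The main ingredient is a standard concavity inequality: for $\alpha = p/2 \in [1/2,1]$ and $y \in [0,1]$, the function $(1-y)^\alpha$ is concave in $y$, hence lies below its tangent at $0$, giving $(1-y)^\alpha \le 1 - \alpha y$, equivalently $1 - (1-y)^\alpha \ge \alpha y$. Applying this with $y = t^2$ and $\alpha = p/2$ yields
\[
 1 - (1-t^2)^{p/2} \;\ge\; \tfrac{p}{2}\, t^2.
\]
I would verify this inequality quickly by differentiating $f(y) = (1-y)^\alpha - (1-\alpha y)$, noting $f(0)=0$ and $f'(y) = \alpha\bigl(1 - (1-y)^{\alpha-1}\bigr) \le 0$ since $(1-y)^{\alpha-1} \ge 1$ when $\alpha \le 1$.

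Combining, $b^p - c^p \ge (p/2)\, t^2 b^p$, so
\[
 \frac{a^p}{b^p - c^p} \;\le\; \frac{t^p b^p}{(p/2)\,t^2 b^p} \;=\; \frac{2}{p}\, t^{p-2}.
\]
Because $p - 2 \le 0$, the exponent is nonpositive, so $t^{p-2}$ is decreasing in $t$; the lower bound $t \ge \epsilon^{1/p}$ then gives $t^{p-2} \le \epsilon^{(p-2)/p}$. Finally, $2/p \le 2 \le 10$ for $p \ge 1$, completing the proof with the stated constant.

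The main (small) obstacle is just being careful with the sign of $p - 2$ and the direction of inequalities when the exponent is negative, which is why I prefer the substitution $t = a/b$: it isolates the whole problem into a clean univariate inequality where tracking monotonicity is straightforward. No deeper machinery seems necessary.
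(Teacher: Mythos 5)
Your proof is correct, and it takes a genuinely different route from the paper's. The paper sets $a=\delta c$ (after disposing of $c=0$) and runs a three-way case analysis on $\delta<1$, $1\le\delta\le 2$, $\delta>2$, bounding $(1+\delta^2)^{p/2}$ by hand in each regime to extract a term of the form $\mathrm{const}\cdot\epsilon^{(2-p)/p}a^p + c^p$ below $b^p$; this is elementary but somewhat ad hoc and is where the loose constant $10$ comes from. You instead normalize by $b$, set $t=a/b\ge\epsilon^{1/p}$, and reduce everything to the single tangent-line (concavity) inequality $(1-y)^{p/2}\le 1-\tfrac{p}{2}y$ for $y\in[0,1]$, which gives $b^p-c^p\ge\tfrac{p}{2}t^2b^p$ and hence $a^p/(b^p-c^p)\le\tfrac{2}{p}t^{p-2}\le\tfrac{2}{p}\epsilon^{(p-2)/p}$, with correct handling of the negative exponent $p-2$. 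This is shorter, avoids all case splits, and yields the sharper constant $2/p\le 2$ in place of $10$; note also that, as you observe, the hypothesis $b^p\ge c^p$ is automatic from $a^2=b^2-c^2$ and is not actually needed. The only cosmetic point worth adding is the degenerate case $b=0$ (then $a=c=0$ and the claim is trivial), which legitimizes the division by $b$ and by $b^p-c^p$; since $\epsilon>0$ forces $a>0$, hence $b>c$, whenever $b>0$, this is immediate.
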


%

\begin{claim}
\label{claim:four}
Let $a,b,x \ge 0$. Let $1\ge p >0$. Then
$$
\big| (a+x)^p - (b+x)^p\big| \le \big| a^p - b^p\big|.
$$
\end{claim}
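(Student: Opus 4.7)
The plan is to reduce to the case $a \ge b$ by symmetry (the statement is symmetric in $a,b$), at which point both quantities in the absolute values are non-negative since $u \mapsto u^p$ is non-decreasing on $[0,\infty)$ for $p > 0$. So it suffices to prove
\[
(a+x)^p - (b+x)^p \;\le\; a^p - b^p
\quad\text{whenever } a \ge b \ge 0,\ x \ge 0,\ 0 < p \le 1.
\]

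The key observation is that $h(u) = u^p$ is concave on $[0,\infty)$ for $p \in (0,1]$, so its derivative $h'(u) = p u^{p-1}$ is non-increasing. Writing each side as an integral of $h'$, we get
\[
(a+x)^p - (b+x)^p \;=\; \int_{b+x}^{a+x} h'(u)\, du \;=\; \int_b^a h'(v+x)\, dv,
\]
while $a^p - b^p = \int_b^a h'(v)\, dv$. Since $v+x \ge v$ and $h'$ is non-increasing, $h'(v+x) \le h'(v)$ for every $v \in [b,a]$, and the claim follows by integrating this pointwise inequality.

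The only mild subtlety is that $h'(u) = p u^{p-1}$ blows up at $u = 0$ when $p < 1$, so one should either treat the case $b = 0$ (and possibly $b+x = 0$) separately by a direct computation / continuity argument, or replace the lower limits by $b + \delta$ and let $\delta \to 0^+$, using that the improper integral $\int_0^a u^{p-1} du$ converges for $p > 0$. There is no real obstacle here; the proof is essentially an observation about concave functions having decreasing chord slopes under rightward shifts, and the argument as a whole is short.
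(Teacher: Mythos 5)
Your proof is correct and is essentially the same as the paper's: both reduce to $a \ge b$, use that $h'(u) = p u^{p-1}$ is non-increasing (concavity of $u^p$), and integrate a pointwise derivative inequality via the fundamental theorem of calculus. The only cosmetic difference is the parametrization of the integral (you integrate over $v \in [b,a]$ with $x$ fixed; the paper integrates over $t \in [0,x]$ with $a,b$ fixed).
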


%

\begin{lemma}\label{lem:numbers}
Let $a,b,f,g \ge 0$. Then we have
$$
|\sqrt{a^2+b^2} - \sqrt{f^2+g^2}| \le |a-f| +|b-g|.
$$
\end{lemma}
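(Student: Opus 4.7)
The plan is to interpret the left-hand side as the difference of two Euclidean norms on $\mathbb{R}^2$ and then apply the reverse triangle inequality, followed by the standard comparison between the $\ell_2$ and $\ell_1$ norms on $\mathbb{R}^2$.

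Concretely, I would set $u = (a,b)$ and $v = (f,g)$ in $\mathbb{R}^2$, so that the two quantities inside the absolute value are $\|u\|_2$ and $\|v\|_2$. The reverse triangle inequality for the Euclidean norm gives
$$\bigl|\sqrt{a^2+b^2} - \sqrt{f^2+g^2}\bigr| \;=\; \bigl|\|u\|_2 - \|v\|_2\bigr| \;\le\; \|u-v\|_2 \;=\; \sqrt{(a-f)^2 + (b-g)^2}.$$
Then I would bound $\sqrt{(a-f)^2+(b-g)^2}$ by $|a-f|+|b-g|$ using the elementary inequality $\sqrt{x^2+y^2} \le |x|+|y|$, which follows by squaring: $(|x|+|y|)^2 = x^2 + 2|x||y| + y^2 \ge x^2+y^2$. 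Chaining these two bounds gives the claimed inequality.

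There is essentially no obstacle here; the only thing to note is that the hypothesis $a,b,f,g \ge 0$ is not even needed for the argument, since both steps (the reverse triangle inequality and the $\ell_2 \le \ell_1$ comparison on $\mathbb{R}^2$) hold for arbitrary real coordinates. I would therefore keep the write-up to just the two displayed inequalities above, with a one-line justification of each.
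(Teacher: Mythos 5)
Your proof is correct and follows essentially the same route as the paper: interpret the two square roots as Euclidean norms of $(a,b)$ and $(f,g)$, apply the reverse triangle inequality to get $\|(a-f,b-g)\|_2$, then bound that by the $\ell_1$-norm. Your additional observation that the nonnegativity hypothesis is superfluous is accurate.
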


%
%

\begin{claim}
\label{claim:five}
Let $a,b \ge 0$ and $1\ge \epsilon >0$ and $p \ge 1$. Then
$$
(a+b)^p \le (1+\epsilon) a^p + (1+\frac{2p}{\epsilon})^p b^p.
$$
\end{claim}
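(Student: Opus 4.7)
The plan is to reduce the inequality to a one-parameter case split governed by the ratio $b/a$. Pick a parameter $\alpha>0$ (to be chosen), and split into two cases: either $b \le \alpha a$, in which case $a+b \le (1+\alpha)a$ and thus $(a+b)^p \le (1+\alpha)^p a^p$; or $b > \alpha a$, in which case $a < b/\alpha$ and thus $a+b < (1+1/\alpha)b$, giving $(a+b)^p < (1+1/\alpha)^p b^p$. Since in each case the estimate is dominated by the sum, one obtains the uniform bound
\[
(a+b)^p \;\le\; (1+\alpha)^p a^p \,+\, (1+1/\alpha)^p b^p .
\]

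Next, I would tune $\alpha$ so that $(1+\alpha)^p = 1+\epsilon$, i.e., take $\alpha := (1+\epsilon)^{1/p}-1$. This fixes the first coefficient exactly at $1+\epsilon$, and the claim reduces to showing that $1+1/\alpha \le 1 + 2p/\epsilon$, equivalently $\alpha \ge \epsilon/(2p)$.

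The remaining analytic step is the lower bound $(1+\epsilon)^{1/p} \ge 1 + \epsilon/(2p)$ for $\epsilon\in(0,1]$ and $p\ge 1$. I would prove this by combining two standard estimates: since $1+t \le 2$ for $t\in[0,1]$, the integral $\ln(1+\epsilon) = \int_0^\epsilon \frac{dt}{1+t}$ is at least $\epsilon/2$; then $(1+\epsilon)^{1/p} = \exp\!\bigl(\tfrac{1}{p}\ln(1+\epsilon)\bigr) \ge \exp(\epsilon/(2p)) \ge 1 + \epsilon/(2p)$. This gives $\alpha \ge \epsilon/(2p)$ and hence $1/\alpha \le 2p/\epsilon$, completing the argument.

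The main (minor) obstacle is just identifying the correct parameterization: choosing $\alpha = (1+\epsilon)^{1/p}-1$ rather than, say, $\alpha = \epsilon/p$, which would not match the first coefficient exactly and would require a messier bookkeeping step. With the choice above, the case split cleanly handles both factors and the only real calculation is the elementary Bernoulli-type inequality $(1+\epsilon)^{1/p} \ge 1+\epsilon/(2p)$, which is essentially one line.
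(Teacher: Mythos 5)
Your proof is correct and takes essentially the same approach as the paper: a two-case split on the ratio $b/a$ (with threshold $\alpha=(1+\epsilon)^{1/p}-1$ in your version versus $\epsilon/(2p)$ in the paper's), which in either parameterization reduces to the single elementary inequality $(1+\tfrac{\epsilon}{2p})^p \le 1+\epsilon$, equivalently $(1+\epsilon)^{1/p} \ge 1+\tfrac{\epsilon}{2p}$. The paper verifies this via the Taylor expansion of $\ln(1+x)$ while you use the integral bound $\ln(1+\epsilon)\ge\epsilon/2$ together with $e^x\ge 1+x$, but these are interchangeable one-line arguments for the same fact.
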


\section{Dimensionality Reduction}\label{sec:dim}

Our first result is a dimensionality reduction lemma for clustering problems where the cluster centers are
contained in a low-dimensional subspace such as, for example, $k$-median clustering. 

\begin{algorithm}[H]
	\caption{Dimensionality Reduction Algorithm}
	\begin{algorithmic}[1]
		\Procedure{DimensionalityReduction}{$A,n,d,k, \epsilon, p$}	
		\State Compute a $(1+\epsilon)$-approximation $S$ to the $k$-subspace problem with cost function sum of $p$-th powers of $l_2$-distances
		\State Let $\opt$ denote the cost of an optimal solution to the above problem
		\State Let $k^*=k$
		\While{there exists a subspace $S' \supseteq S$ of rank $k^*+k$ such that $\cost_p(A,S') \le \cost_p(A,S) - \epsilon^{\max\{\frac{2}{p},1\}} \opt/80$} 
		\State $k^*=k^*+k$
		\State $S=S'$
		\EndWhile
		\State Let $A'$ be the projection of $A$ on $S$
		\State Let $B \in \REAL^{n \times (d+1)}$ be a matrix whose entry at position $1\le i,j\le d$ equals the entry of $A'$
		and whose entries in the last column are $\|A_{i*} - A'_{i*}\|_{2}$
		\State \Return $B$
		\EndProcedure
\end{algorithmic}
\end{algorithm}
                                             
In the next lemma we show for the output space $S$ of dimension $\ell$ of the above algorithm and any subspace $S^*$ of
dimension $\ell+k$ that contains $S$ that the corresponding projections of the rows of $A$ onto $S$ and $S^*$ have small
distance on average. 

\begin{lemma}\label{lem:proj}
Let $1\ge \epsilon >0$ and $p \ge 1$.
Let $A$ be the input matrix of algorithm {\sc DimensionalityReduction}. Let $\opt$ be the cost of an optimal solution to the
linear $k$-subspace problem with respect to the sum of $p$-th powers of $l_2$-distances.
Let $S\subseteq \REAL^d$ be 
the subspace in the last iteration of the while loop and let $\ell$ be its dimension. Let $S^*\subseteq \REAL^d$ 
be an arbitrary subspace of dimension $k+\ell$ that contains $S$. Let $P$ and $P^*$ be orthogonal projection matrices onto $S$ and $S^*$. 
Then we have
$$
 \|AP-AP^*\|_{p,2}^p \le \epsilon \opt.
$$
\end{lemma}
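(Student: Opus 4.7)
My plan is to combine a rowwise Pythagorean decomposition with the termination condition of the while loop, then split into the two exponent regimes $p \ge 2$ and $1 \le p < 2$, using Claims \ref{claim:one} and \ref{claim:two} respectively.

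\textbf{Per-row decomposition.} For each $i$, set $b_i = \|A_{i*} - A_{i*}P\|_2$, $c_i = \|A_{i*} - A_{i*}P^*\|_2$, and $a_i = \|A_{i*}P - A_{i*}P^*\|_2$. Because $S \subseteq S^*$ the projections are nested, $P^*P = PP^* = P$, so $I - P = (I - P^*) + (P^* - P)$ is a sum of two operators with orthogonal ranges $(S^*)^\perp$ and $S^* \cap S^\perp$. Pythagoras then gives $b_i^2 = c_i^2 + a_i^2$, so $a_i^2 = b_i^2 - c_i^2$ and in particular $b_i \ge c_i$. Summing, $\|AP - AP^*\|_{p,2}^p = \sum_i a_i^p$, $\cost_p(A,S) = \sum_i b_i^p$, and $\cost_p(A,S^*) = \sum_i c_i^p$. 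Since $S^*$ has dimension $\ell + k$ and contains $S$, failure of the while-loop condition at termination forces
\[
\sum_i b_i^p - \sum_i c_i^p \;\le\; \epsilon^{\max\{2/p,\,1\}}\,\opt/80.
\]
I also retain $\sum_i b_i^p = \cost_p(A,S) \le (1+\epsilon)\opt \le 2\opt$, inherited from the initial $(1+\epsilon)$-approximation in line~2 and monotonicity of the loop.

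\textbf{Case $p \ge 2$.} Here $\max\{2/p,1\}=1$, so the termination bound reads $\sum_i b_i^p - \sum_i c_i^p \le \epsilon\opt/80$. Claim \ref{claim:one} applied termwise gives $a_i^p \le b_i^p - c_i^p$, and summing immediately yields $\sum_i a_i^p \le \epsilon\opt/80 \le \epsilon\opt$.

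\textbf{Case $1 \le p < 2$.} Now $\max\{2/p,1\} = 2/p$, so the termination bound becomes $\sum_i b_i^p - \sum_i c_i^p \le \epsilon^{2/p}\opt/80$. The termwise inequality of Claim \ref{claim:one} fails, so I partition the indices with threshold $\epsilon' := \epsilon/4$, setting $I_{\text{big}} = \{i : a_i^p \ge \epsilon' b_i^p\}$ and $I_{\text{small}}$ its complement. On $I_{\text{small}}$, $\sum_{I_{\text{small}}} a_i^p < \epsilon' \sum_i b_i^p \le 2\epsilon'\opt = \epsilon\opt/2$. On $I_{\text{big}}$ the hypotheses of Claim \ref{claim:two} hold ($a_i^2 = b_i^2 - c_i^2$, $b_i^p \ge c_i^p$ from $b_i \ge c_i$, and $a_i^p \ge \epsilon' b_i^p$), so $a_i^p \le 10(\epsilon')^{(p-2)/p}(b_i^p - c_i^p)$. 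Summing and plugging in the termination bound,
\[
\sum_{i\in I_{\text{big}}} a_i^p \;\le\; 10\,(\epsilon/4)^{(p-2)/p}\cdot \frac{\epsilon^{2/p}\opt}{80} \;=\; \frac{4^{(2-p)/p}}{8}\,\epsilon^{(p-2)/p+2/p}\,\opt \;=\; \frac{4^{(2-p)/p}}{8}\,\epsilon\,\opt \;\le\; \frac{\epsilon\opt}{2},
\]
since $(2-p)/p \le 1$ for $p \ge 1$. Adding the two contributions completes the bound $\sum_i a_i^p \le \epsilon\opt$.

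\textbf{Main obstacle.} The delicate case is $1 \le p < 2$: there Claim \ref{claim:one} breaks, and Claim \ref{claim:two} introduces a loss factor $(\epsilon')^{(p-2)/p}$ that diverges as $\epsilon' \to 0$. The two-regime partition at threshold $\epsilon' = \Theta(\epsilon)$ is chosen so that this divergence is exactly cancelled by the strengthened $\epsilon^{2/p}$ on the termination side; this is precisely why the while loop uses the exponent $\max\{2/p,1\}$ rather than a plain $\epsilon$.
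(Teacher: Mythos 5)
Your proof is correct and follows essentially the same approach as the paper: rowwise Pythagorean decomposition, the termination bound $\sum_i b_i^p - \sum_i c_i^p \le \epsilon^{\max\{2/p,1\}}\opt/80$, Claim \ref{claim:one} termwise for $p\ge 2$, and the two-regime split at threshold $\Theta(\epsilon)$ with Claim \ref{claim:two} for $1\le p<2$. The only cosmetic deviation is that you fold $p=2$ into the $p\ge 2$ branch via Claim \ref{claim:one} (which holds there with equality), while the paper handles $p=2$ as a separate base case.
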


\begin{proof}
We know from the algorithm that $\|A-AP\|_{p,2}^p - \|A-AP^*\|_ {p,2}^p \le \epsilon^{\max\{\frac{2}{p},1\}} \opt/80$. 
Furthermore, we have $\|A-AP\|_{p,2}^p \le (1+\epsilon) \cdot \opt$ by the way $S$ is computed. 
We first consider the case when $p = 2$. 
Since $A_{i*}- A_{i*} P^*$ is orthogonal to $A_{i*}P^*-A_{i*}P$ we know that in this case
\begin{eqnarray*}
\label{eqn:one}
\|A_{i*}P-A_{i*}P^*\|_2^2 & =&  \|A_{i*}-A_{i*}P\|_2^2 - \|A_{i*}-A_{i*}P^*\|_2^2 
\end{eqnarray*}
Applying the above equality row wise we obtain
$$
\|AP-AP^*\|_{2,2}^2 = \sum_{i=1}^n \|A_{i*}P-A_{i*}P^*\|_{2}^2 =  \sum_{i=1}^n ( \|A_{i*}-A_{i*}P\|_{2}^2 - \|A_{i*}-A_{i*}P^*\|_{2}^2) \le \epsilon \cdot \opt.
$$ 
Next we consider $p>2$. We define $a= \|A_{i*}P-A_{i*}P^*\|_2$, $b=\|A_{i*}-A_{i*}P\|_2$ and $c=\|A_{i*}-A_{i*}P^*\|_2$.
We observe that $a^2 = b^2 -c^2$ and so by Claim \ref{claim:one} we obtain that
$$
\|A_{i*}P-A_{i*}P^*\|_2^p  \le  \|A_{i*}-A_{i*}P\|_2^p - \|A_{i*}-A_{i*}P^*\|_2^p. 
$$
Again we can apply the inequality row-wise and obtain
$$
\|AP-AP^*\|_{p,2}^p = \sum_{i=1}^n \|A_{i*}P-A_{i*}P^*\|_{2}^p \le  \sum_{i=1}^n ( \|A_{i*}-A_{i*}P\|_{2}^p - \|A_{i*}-A_{i*}P^*\|_{2}^p) \le \epsilon \cdot \opt.
$$

Now we consider the final case of $1\le p <2$. Here we will make a case distinction.
The first case is that 
$\|A_{i*}P^* - A_{i*}P\|_2^p \le \frac{\epsilon}{4} \|A_{i*} - A_{i*}P\|_2^p$. Let $J$ be the set of indices
for which this inequality is satisfied. It follows by summing up over all rows in $J$ that
$$
\sum_{i\in J} \|A_{i*}P^* - A_{i*}P\|_2^p \le \frac{\epsilon}{4} \|A - AP\|_{p,2}^p \le \frac{\epsilon}{4} (1+\epsilon) \opt \le \frac{\epsilon}{2}\opt.
$$
For the remaining case we will use Claim \ref{claim:two} with $a= \|A_{i*}P-A_{i*}P^*\|_2$, $b=\|A_{i*}-A_{i*}P\|_2$ and $c=\|A_{i*}-A_{i*}P^*\|_2$.
We observe that $a^2 = b^2 - c^2$ and that $a^p \ge \frac{\epsilon}{4} b^p$ since we are in the second case. Furthermore, $b^p \ge c^p$ by the choices of
$P$ and $P^*$. Therefore, Claim \ref{claim:two} implies
$$
\|A_{i*}P-A_{i*}P^*\|_2^p  \le  10 (\epsilon/4)^{\frac{p-2}{p}} \big(\|A_{i*}-A_{i*}P\|_2^p - \|A_{i*}-A_{i*}P^*\|_2^p \big). 
$$
Applying the above inequality row wise we obtain
$$
\sum_{i \notin J}\|A_{i*}P^*-A_{i*}P\|_{2}^p \le  10 (\epsilon/4)^{\frac{p-2}{p}} \sum_{i\notin J}( \|A_{i*}-A_{i*}P\|_{2}^p - \|A_{i*}-A_{i*}P^*\|_{2}^p) \le \frac{\epsilon}{2} \opt .
$$ 
Summing up the two cases yields the lemma.
\end{proof}

\begin{remark}\label{rem:one}
We observe that in the proof we only used two properties of $S$. The first one is that $\|A-AP\|_{p,2}^p \le (1+\epsilon) \cdot \opt$ and
the second one is that $\|A-AP\|_{p,2}^p - \|A-AP^*\|_ {p,2}^p \le \epsilon^{\max\{\frac{2}{p},1\}} \opt/80$. Thus, any subspace that
satisfies these two properties will also satisfy the above lemma. We will use this later on when we discuss optimizing the running time
of our algorithm. 
\end{remark}

\subsection{Dimensionality reduction for sums of Euclidean distances}

We first consider the case of minimizing sum of distances. This case is technically less tedious and illustrates the underlying
ideas.

\begin{theorem}
\label{thm:one}
Let $1\ge \epsilon>0$.
Let $A \subseteq \REAL^{n\times d}$ be a matrix. Let $B \in \REAL^{n\times (d+1)}$ be the rank $O(k/\epsilon^2)$ matrix
output by algorithm {\sc DimensionalityReduction} with input 
parameters $A, n,d,k, p=1$ and $\epsilon/2$. Let $C \subseteq \REAL^d$ be an arbitrary non-empty set that is contained in a $k$-dimensional subspace. Let $A'$ and $B'$ be the matrices whose rows contain the closest points in the (closure of) $C$ wrt. the rows of $A$, $BI$, respectively.
Then we have
$$
\big| \|A-A'\|_{1,2} - \|B-B'I^T\|_{1,2} \big| \le \epsilon \|A-A'\|_{1,2},
$$
where $I \in \REAL^{(d+1) \times d}$ has diagonal entries $1$ and all other entries are $0$.
\end{theorem}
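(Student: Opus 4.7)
\vspace{2mm}
\noindent\textbf{Proof proposal.} The plan is to compare, row by row, the norm of $(A-A')_{i*}$ with the norm of $(B - B'I^T)_{i*}$ by decomposing each distance into a component orthogonal to a carefully chosen low-dimensional subspace and a component inside it, then invoke Lemma \ref{lem:numbers} to bound the discrepancy by two simpler terms that can be summed.

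Fix any $C$ contained in a $k$-dimensional subspace $V$, and let $S^* = V + S$, where $S$ is the subspace of dimension $\ell$ produced by the last iteration of the while loop of {\sc DimensionalityReduction}. Then $S \subseteq S^*$ and $\dim(S^*) \le \ell+k$, so $S^*$ is an admissible subspace for Lemma \ref{lem:proj} and for the termination condition of the while loop (both applied with parameter $\epsilon/2$, as in the theorem). Let $P$ and $P^*$ be the orthogonal projections onto $S$ and $S^*$. Since $C \subseteq S^*$, the Pythagorean theorem gives, for every row $i$,
\[
\|A_{i*} - A'_{i*}\|_2 \;=\; \sqrt{a_i^2 + b_i^2}, \qquad a_i := \|A_{i*} - A_{i*}P^*\|_2,\ b_i := \dist(A_{i*}P^*, C),
\]
and the closest point in $C$ to $A_{i*}$ coincides with the closest point in $C$ to $A_{i*}P^*$. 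On the other hand, the $i$-th row of $B - B'I^T$ has its first $d$ coordinates equal to $A_{i*}P - B'_{i*}$ and last coordinate equal to $\|A_{i*} - A_{i*}P\|_2$, hence
\[
\|(B - B'I^T)_{i*}\|_2 \;=\; \sqrt{f_i^2 + g_i^2},\qquad f_i := \|A_{i*} - A_{i*}P\|_2,\ g_i := \dist(A_{i*}P,C).
\]

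Applying Lemma \ref{lem:numbers} row-wise and then summing via the triangle inequality gives
\[
\bigl|\, \|A-A'\|_{1,2} - \|B-B'I^T\|_{1,2} \,\bigr| \;\le\; \sum_{i=1}^{n}\! \bigl(|a_i-f_i| + |b_i-g_i|\bigr).
\]
For the first sum, $S \subseteq S^*$ implies $a_i \le f_i$, so $\sum_i |a_i-f_i| = \|A-AP\|_{1,2} - \|A-AP^*\|_{1,2}$, which the termination condition of the while loop (with parameter $\epsilon/2$ and $p=1$) bounds by $(\epsilon/2)^2\,\opt/80$. For the second sum, the triangle inequality for $\dist(\cdot,C)$ yields $|b_i - g_i| \le \|A_{i*}P^* - A_{i*}P\|_2$, so $\sum_i |b_i-g_i| \le \|AP-AP^*\|_{1,2}$, which is at most $(\epsilon/2)\,\opt$ by Lemma \ref{lem:proj}.

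Combining the two bounds gives a total of at most $\epsilon\cdot\opt$, and since $C$ lies in the $k$-dimensional subspace $V$ we have $\opt \le \|A-AP_V\|_{1,2} \le \|A-A'\|_{1,2}$, which yields the claimed $\epsilon\|A-A'\|_{1,2}$ bound. The main conceptual step is the choice $S^* = V + S$, which is what makes both the Pythagorean decomposition valid (since $C\subseteq S^*$) and the two structural estimates on $S$ applicable (since $S\subseteq S^*$ with $\dim(S^*)-\dim(S)\le k$); the remaining work is bookkeeping with the triangle inequality and the elementary Lemma \ref{lem:numbers}.
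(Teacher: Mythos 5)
Your proof is correct and follows the same overall architecture as the paper's: choose $S^* = S + V$ so that $C \subseteq S^*$ and $S \subseteq S^*$ with $\dim(S^*) \le \ell+k$, decompose each of $\|A_{i*}-A'_{i*}\|_2$ and $\|(B-B'I^T)_{i*}\|_2$ by the Pythagorean theorem into a component orthogonal to the relevant subspace plus a within-subspace distance to $C$, invoke Lemma~\ref{lem:numbers} row-wise, then control the two resulting sums. The one place you diverge is in bounding $\sum_i |a_i - f_i|$: the paper bounds $|a_i-f_i|\le\|A_{i*}P-A_{i*}P^*\|_2$ per row via the reverse triangle inequality and then folds both error sums into the single quantity $2\|AP-AP^*\|_{1,2} \le \epsilon\,\opt$ from Lemma~\ref{lem:proj}, whereas you read off $\sum_i|a_i-f_i| = \|A-AP\|_{1,2}-\|A-AP^*\|_{1,2}$ directly and bound it by the while-loop stopping criterion. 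Both are valid; the paper's version is marginally cleaner bookkeeping (a single invocation of Lemma~\ref{lem:proj}), while yours gives a slightly sharper constant, but they are the same argument in substance.
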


\begin{proof}
Let $S$ be the subspace as in Lemma \ref{lem:proj} and let $S^*$ be the span of $S$ and $C$ (if $S^*$ has less than $\ell+k$ dimensions, we can add arbitrary 
dimensions). Let $P, P^*$ be the corresponding orthogonal projection matrices. We know from Lemma \ref{lem:proj} that 
$$
\|AP - AP^*\|_{1,2} \le \frac{\epsilon}{2} \cdot \opt
$$
where $\opt$ is the cost of an optimal solution to the $k$-subspace problem with sum of distances.
By orthogonality, we can write 
$$
\|A_{i*} - A'_{i*}\|_2 = \sqrt{ \| A_{i*} - A_{i*}P^*\|_2^2 + \|A_{i*}P^* - A'_{i*}\|_2^2}.
$$
Furthermore, we have
$$
\|B_{i*} - B'_{i*} I^T\|_2 = \sqrt{  \|A_{i*}P - B'_{i*}\|_2^2+\|A_{i*} - A_{i*}P\|_2^2}. 
$$
Using Lemma \ref{lem:numbers} with $a= \| A_{i*} - A_{i*}P^*\|_2$, $b=\|A_{i*}P^* - A'_{i*}\|_2$, $e=\|A_{i*} - A_{i*}P\|_2$ and
$f= \|A_{i*}P - B'_{i*}\|_2$ we obtain that 
$$
|\|A_{i*} - A'_{i*}\|_2 - \|B_{i*} - B'_{i*} I^T\|_2| \le \big| \| A_{i*} - A_{i*}P^*\|_2-\|A_{i*} - A_{i*}P\|_2 \big| + 
\big| \|A_{i*}P^* - A'_{i*}\|_2 -\|A_{i*}P - B'_{i*}\|_2 \big|.
$$
Using the triangle inequality and the fact that $A'_{i*}, B'_{i*}$ is the closest point in the closure of $C$ to $A_{i*}P^*$ and
$A_{i*}P$, respectively, 
we obtain
$$
|\|A_{i*} - A'_{i*}\|_2 - \|B_{i*} - B'_{i*} I^T\|_2| \le 2 \cdot \|A_{i*}P - A_{i*}P^*\|_2
$$
Summing up over all rows and using $\|AP-AP^*\|_{1,2} \le \frac{\epsilon}{2} \cdot \opt$ together with the fact that the sum of distances to $C$
is at least $\opt$ we obtain the result.
\end{proof}

If $C$ is a $k$-dimensional linear subspace we can slightly simplify the statement of the above theorem.

\begin{corollary}\label{cor:useful}
Let $A \subseteq \REAL^{n\times d}$ be a matrix. Let $B \in \REAL^{n\times (d+1)}$ be the output of algorithm {\sc DimensionalityReduction} with input 
parameters $A, n,d,k, p=1$ and $\epsilon/2$. Let $P$ be an arbitrary rank $k$ orthogonal projection matrix.
Then we have
$$
\big| \|A-AP\|_{1,2} - \|B-BIPI^T\|_{1,2} \big| \le \epsilon \|A-AP\|_{1,2},
$$
where $I \in \REAL^{(d+1) \times d}$ has diagonal entries $1$ and all other entries are $0$.
\end{corollary}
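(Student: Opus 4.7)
The plan is to derive this immediately from Theorem \ref{thm:one} by specializing the set $C$ to be the $k$-dimensional linear subspace that is the image of the projection $P$. Since the theorem is stated for any non-empty set contained in a $k$-dimensional subspace, a linear subspace is a valid choice, and because it is closed the ``closure of $C$'' in the theorem is just $C$ itself.

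Concretely, I would set $C := \{xP : x \in \REAL^d\}$, the range of $P$. Then for any $v \in \REAL^d$, the closest point in $C$ to $v$ is the orthogonal projection $vP$; this is the defining property of orthogonal projection onto $C$. Applying this to each row of $A$ gives $A'_{i*} = A_{i*}P$, so $A' = AP$. Likewise, the $i$-th row of $BI$ lives in $\REAL^d$ (it is just $B_{i*}$ with its last coordinate dropped), so its closest point in $C$ is $(B_{i*}I)P = B_{i*}IP$, giving $B' = BIP$. Therefore $B'I^T = BIPI^T$, which is exactly the matrix appearing in the corollary.

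Substituting these identifications into the conclusion of Theorem \ref{thm:one}, namely
\[
\bigl| \|A-A'\|_{1,2} - \|B-B'I^T\|_{1,2} \bigr| \le \epsilon \|A-A'\|_{1,2},
\]
yields the desired inequality
\[
\bigl| \|A-AP\|_{1,2} - \|B-BIPI^T\|_{1,2} \bigr| \le \epsilon \|A-AP\|_{1,2}.
\]

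There is no real obstacle here; the only thing to verify carefully is that ``closest point in $C$'' coincides with ``projection by $P$'' when $C$ is the column range of the orthogonal projection matrix $P$, and that the indexing conventions $BI$, $I^T$ line up so that $B'I^T = BIPI^T$. Both are immediate from the definitions, so the corollary is a direct specialization of the theorem.
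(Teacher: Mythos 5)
Your proof is correct and is exactly the intended derivation: the paper gives no explicit proof for the corollary beyond the remark that it is the specialization of Theorem \ref{thm:one} to $C$ being a $k$-dimensional linear subspace, which is precisely what you carry out. The identifications $A' = AP$ and $B'I^T = BIPI^T$ are the whole content, and you verify them cleanly.
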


\subsection{Dimensionality reduction for powers of Euclidean distances}

In order to obtain a dimensionality reduction for powers of Euclidean distances we follow the same approach as before.
The main challenge is that some calculations become more difficult as the triangle inequality is replaced by a relaxed
triangle inequality.

\begin{theorem}
\label{thm:two}
Let $p\ge 1$ be a constant.
Let $A \subseteq \REAL^{n\times d}$ be a matrix and $1 \ge \epsilon \ge 0$. Let $B \in \REAL^{n\times (d+1)}$ be the rank $O(k/\epsilon^{O(p)})$ matrix
output by algorithm {\sc DimensionalityReduction} with input parameters $A, n,d,k, p\ge 1$ and $\frac{\epsilon^{p+3}}3{(84p)^{2p}}$. Let $C \subseteq \REAL^d$ be an arbitrary non-empty set that is contained in a $k$-dimensional subspace. Let $A'$ and $B'$ be the matrices whose rows contain the closest points in the (closure of) $C$ wrt. the rows of $A$, $BI$, respectively.
Then we have
$$
\big| \|A-A'\|_{p,2}^p - \|B-B'I^T\|_{p,2}^p \big| \le \epsilon \|A-A'\|_{p,2}^p,
$$
where $I \in \REAL^{(d+1) \times d}$ has diagonal entries $1$ and all other entries are $0$.
\end{theorem}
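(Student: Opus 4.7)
The plan is to follow the same three-step skeleton as in the $p=1$ case (Theorem~\ref{thm:one}) — Pythagorean decomposition of the two distances, pointwise comparison via Lemma~\ref{lem:numbers}, then summation combined with Lemma~\ref{lem:proj} — but with every use of the triangle inequality in $\ell_2$ replaced by a power-preserving analogue obtained from Claim~\ref{claim:five}. Concretely, let $S$ and $S^{*}$ be defined exactly as in the proof of Theorem~\ref{thm:one}, i.e.\ $S$ is the subspace returned by \textsc{DimensionalityReduction} (so Lemma~\ref{lem:proj} applies) and $S^{*}\supseteq S$ is obtained by adding a $k$-dimensional subspace containing $C$. Write $P,P^{*}$ for the corresponding orthogonal projections and set
\[
a_i=\|A_{i*}-A_{i*}P^{*}\|_2,\quad b_i=\|A_{i*}P^{*}-A'_{i*}\|_2,\quad e_i=\|A_{i*}-A_{i*}P\|_2,\quad f_i=\|A_{i*}P-B'_{i*}\|_2.
\]
As before, the Pythagorean theorem gives $\|A_{i*}-A'_{i*}\|_2=\sqrt{a_i^2+b_i^2}$ and $\|B_{i*}-B'_{i*}I^{T}\|_2=\sqrt{e_i^2+f_i^2}$, and Lemma~\ref{lem:numbers} yields the pointwise bound $|\sqrt{a_i^2+b_i^2}-\sqrt{e_i^2+f_i^2}|\le|a_i-e_i|+|b_i-f_i|$.

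The key new ingredient is the passage to $p$-th powers. I would apply Claim~\ref{claim:five} with a small auxiliary parameter (e.g.\ $\epsilon/2$) twice: first to relate $(\sqrt{e_i^2+f_i^2})^{p}$ and $(\sqrt{a_i^2+b_i^2})^{p}$, paying a multiplicative factor of $(1+\epsilon/2)$ and an additive $(1+4p/\epsilon)^{p}(|a_i-e_i|+|b_i-f_i|)^{p}$ term; and then to split $(|a_i-e_i|+|b_i-f_i|)^{p}\le 2^{p-1}(|a_i-e_i|^{p}+|b_i-f_i|^{p})$ (a one-line convexity bound, or Claim~\ref{claim:five} with $\epsilon=1$). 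The two pointwise differences $|a_i-e_i|$ and $|b_i-f_i|$ are both controlled by $\|A_{i*}P^{*}-A_{i*}P\|_2$: for the first this is immediate from the identity $e_i^{2}-a_i^{2}=\|A_{i*}P^{*}-A_{i*}P\|_2^{2}$ together with $(e_i-a_i)^{2}\le e_i^{2}-a_i^{2}$ since $e_i\ge a_i\ge 0$; for the second it is the triangle inequality in $\REAL^{d}$, because $A'_{i*}$ and $B'_{i*}$ are nearest points in the same set $C$ to $A_{i*}P^{*}$ and $A_{i*}P$ respectively.

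Summing the pointwise inequality over $i$ and invoking Lemma~\ref{lem:proj} with the algorithm's parameter $\epsilon'=\epsilon^{p+3}/(3(84p)^{2p})$ gives
\[
\sum_i(|a_i-e_i|^{p}+|b_i-f_i|^{p})\le 2\,\|AP-AP^{*}\|_{p,2}^{p}\le 2\epsilon'\,\opt\le 2\epsilon'\,\|A-A'\|_{p,2}^{p},
\]
where the last step uses that $C$ lies in a $k$-dimensional subspace so $\|A-A'\|_{p,2}^{p}\ge\opt$. Putting everything together I obtain
\[
\|B-B'I^{T}\|_{p,2}^{p}\le\bigl[(1+\epsilon/2)+2^{p}(1+4p/\epsilon)^{p}\epsilon'\bigr]\|A-A'\|_{p,2}^{p},
\]
and the symmetric calculation (swapping the roles of the two triples in Claim~\ref{claim:five}) yields the matching lower bound. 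The choice of $\epsilon'$ is engineered so that $2^{p}(1+4p/\epsilon)^{p}\epsilon'\le\epsilon/2$, and a short check using $(1+4p/\epsilon)^{p}\le(5p/\epsilon)^{p}$ and $(84p)^{2p}\gg(10p)^{p}$ confirms this with plenty of room to spare.

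The main obstacle is the bookkeeping of constants: the multiplicative blow-up $(1+2p/\epsilon'')^{p}$ from Claim~\ref{claim:five} is enormous in $1/\epsilon$ and in $p$, so the additive slack from Lemma~\ref{lem:proj} must be driven down to roughly $\epsilon^{p+1}/\poly(p)$ times $\opt$; choosing $\epsilon'=\epsilon^{p+3}/(3(84p)^{2p})$ (as the algorithm does) is exactly what is needed to absorb this factor while still leaving a safety margin of $\epsilon^{2}/\poly(p)$. Everything else is parallel to the $p=1$ proof; one must simply be careful to split the error into a multiplicative $(1+\epsilon)$ part and an additive part bounded by Lemma~\ref{lem:proj}, rather than a single additive bound as in Theorem~\ref{thm:one}.
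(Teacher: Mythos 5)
Your proposal is correct and, in my view, cleaner than the paper's own argument. Both proofs share the same skeleton: the subspaces $S, S^*$, the Pythagorean decompositions $\|A_{i*}-A'_{i*}\|_2^2 = a_i^2+b_i^2$ and $\|B_{i*}-B'_{i*}I^T\|_2^2 = e_i^2+f_i^2$, the appeal to Lemma~\ref{lem:proj} with a tiny $\epsilon'$, and the use of Claim~\ref{claim:five} to split a $p$-th power into a $(1+O(\epsilon))$-multiplicative part plus an additive term with a large $\poly(p/\epsilon)^p$ blowup that is then killed by the small choice of $\epsilon'$. Where you diverge is in the pointwise step: the paper does a case split on which of the two Euclidean norms is larger, introduces an auxiliary parameter $\lambda$, applies Claim~\ref{claim:five} once at the $\ell_2^2$ level (with parameter $\lambda^2$) to manufacture an inequality of the form $\|B_{i*}-B'_{i*}I^T\|_2^2 \le x+a$ with $x = \|A_{i*}-A'_{i*}\|_2^2$, and then applies Claim~\ref{claim:five} again at the $p/2$ level to $(\sqrt{a+x})^p$. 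You instead re-use Lemma~\ref{lem:numbers} exactly as in the $p=1$ proof to obtain the pointwise bound $|\|A_{i*}-A'_{i*}\|_2 - \|B_{i*}-B'_{i*}I^T\|_2| \le |a_i-e_i| + |b_i-f_i| \le 2\|A_{i*}P - A_{i*}P^*\|_2$ (the last step being the reverse triangle inequality for $|a_i-e_i|$ — your argument via $(e_i-a_i)^2 \le e_i^2-a_i^2$ is correct but the reverse triangle inequality gives it in one line — and $1$-Lipschitzness of $\dist(\cdot,C)$ for $|b_i-f_i|$), and then apply Claim~\ref{claim:five} a single time at the $p$ level, followed by the convexity bound $(u+v)^p \le 2^{p-1}(u^p+v^p)$. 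This avoids both the case split and the auxiliary $\lambda$, and makes the $p\ge1$ proof a genuine drop-in generalization of Theorem~\ref{thm:one}. Your constant-tracking is correct: $2^p(1+4p/\epsilon)^p\epsilon' \le (10p/\epsilon)^p \cdot \epsilon^{p+3}/(3(84p)^{2p}) \le \epsilon^3/2100 \le \epsilon/2$, and the lower-bound direction follows by swapping $\sqrt{a_i^2+b_i^2}$ and $\sqrt{e_i^2+f_i^2}$ in Claim~\ref{claim:five} and rearranging, as you note. One small nit: Claim~\ref{claim:five} with $\epsilon=1$ gives $(u+v)^p \le 2u^p + (1+2p)^p v^p$, not the symmetric $2^{p-1}(u^p+v^p)$, so cite convexity of $t\mapsto t^p$ rather than Claim~\ref{claim:five} for that step.
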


\begin{proof}
Let $S$ be the subspace as in Lemma \ref{lem:proj} and let $S^*$ be the span of $S$ and $C$ (if $S^*$ has less than $\ell+k$ dimensions, we can add arbitrary 
dimensions). Let $P, P^*$ be the corresponding orthogonal projection matrices. We know from Lemma \ref{lem:proj} that 
$$
\|AP - AP^*\|_{p,2}^p \le \frac{\epsilon^{p+3}}{3(84p)^{2p}} \cdot \opt
$$
where $\opt$ is the cost of an optimal solution to the $k$-subspace problem with sum of powers of distances.
By orthogonality, we can write 
$$
\|A_{i*} - A'_{i*}\|_2 = \sqrt{ \| A_{i*} - A_{i*}P^*\|_2^2 + \|A_{i*}P^* - A'_{i*}\|_2^2}.
$$
Furthermore, we have
$$
\|B_{i*} - B'_{i*} I^T\|_2 = \sqrt{  \|A_{i*}P - B'_{i*}\|_2^2+\|A_{i*} - A_{i*}P\|_2^2}. 
$$
%
%
Now let us assume that $\|A_{i*} - A'_{i*}\|_2 \le \|B_{i*} - B'_{i*} I^T\|_2$ (the other case is analogous).
By the triangle inequality and the definition of $A'$ and $B'$ we have
\begin{eqnarray*}
\|A_{i*} P - B_{i*}'\|_2^2 & \le & \|A_{i*}P - A_{i*}'\|_2^2\\
&\le & \big( \|A_{i*}P -A_{i*}P^* \|_2 + \|A_{i*}P^* - A_{i*}'\|_2\big)^2\\
& \le & (1+\lambda^2) \cdot \|A_{i*} P^* - A_{i*}'\|_2^2 +(1+\frac{4}{\lambda^2})^2 \cdot \|A_{i*}P - A_{i*}P^*\|_2^2
\end{eqnarray*}
where the last inequality follows from Claim \ref{claim:five} with $\lambda^2$ replacing the $\epsilon$ there. 
We can write
$$
\|A_{i*} - A'_{i*}\|_2^p = (\sqrt{x})^p.
$$
where
$$
x= \|A_{i*} - A_{i*}P^*\|_2^2 +\|A_{i*}P^*-A_{i*}'\|_2^2 = \|A_{i*} - A_{i*}'\|_2^2. 
$$
Then we can write
$$
\|B_{i*} - B'_{i*} I^T\|_2^p\le (\sqrt{a + x})^p
$$
where 
\begin{eqnarray*}
a & = &  \lambda^2 \|A_{i*}P^* -A_{i*}'\|_2^2 + (1+\frac{4}{\lambda^2})^2 \|A_{i*}P - A_{i*}P^*\|_2^2 + \|A_{i*}P - A_{i*}P^*\|_2^2 \\
& \le & \lambda^2 \|A_{i*}P^* -A_{i*}'\|_2^2 + (2+\frac{4}{\lambda^2})^2 \|A_{i*}P - A_{i*}P^*\|_2^2.
\end{eqnarray*}
We use Claim \ref{claim:five} to obtain
$$
(\sqrt{a+x})^{p} \le (1+\frac{\epsilon}{3}) \cdot (\sqrt{x})^p + (1+\frac{6p}{\epsilon})^{p/2} \cdot (\sqrt{a})^p.
$$
Thus, using $\lambda = \epsilon^{1/2+1/p} / (21 p)$ we obtain
$$
(1+\frac{6p}{\epsilon})^{p/2} (\sqrt{a})^p \le \frac{\epsilon}{3} \|A_{i*} -A_{i*}'\|_2^p + \frac{(84p)^{2p}}{\epsilon^{p+2}}\|A_{i*}P - A_{i*}P^*\|_2^p\\
$$
Now it follows that 
$$
(\sqrt{a+x})^{p} - (\sqrt{x})^p \le \frac{\epsilon}{3} \cdot (\sqrt{x})^p +\frac{\epsilon}{3} \|A_{i*} -A_{i*}'\|_2^p + \frac{(84p)^{2p}}{\epsilon^{p+2}}\|A_{i*}P - A_{i*}P^*\|_2^p.\\
$$
The final result follows by summing up over all rows, replacing $(\sqrt{x})^p$ by $\|A_{i*}-A_{i*}'\|_2^p$, and plugging in the bound for $\|AP-AP^*\|_{p,2}^p$.

Now let us assume that $\|A_{i*} - A'_{i*}\|_2 > \|B_{i*} - B'_{i*} I^T\|_2$.
By the triangle inequality and the definition of $A'$ and $B'$ we have
\begin{eqnarray*}
\|A_{i*} P* - A_{i*}'\|_2^2 & \le & \|A_{i*}P* - B_{i*}'\|_2^2\\
&\le & \big( \|A_{i*}P* -A_{i*}P \|_2 + \|A_{i*}P - B_{i*}'\|_2\big)^2\\
& \le & (1+\lambda^2) \cdot \|A_{i*} P - B_{i*}'\|_2^2 +(1+\frac{4}{\lambda^2})^2 \cdot \|A_{i*}P - A_{i*}P^*\|_2^2
\end{eqnarray*}
where the last inequality follows from Claim \ref{claim:five} with $\lambda^2$ replacing the $\epsilon$ there. 
We can write
$$
\|B_{i*} - B'_{i*}I^T\|_2^p = (\sqrt{x})^p.
$$
where
$$
x= \|A_{i*} - A_{i*}P\|_2^2 +\|A_{i*}P-B_{i*}'\|_2^2. 
$$
Then we can write
$$
\|A_{i*} - A'_{i*}\|_2^p \le (\sqrt{a + x})^p
$$
where 
\begin{eqnarray*}
a & = &  \lambda^2 \|A_{i*}P -B_{i*}'\|_2^2 + (1+\frac{4}{\lambda^2})^2 \|A_{i*}P - A_{i*}P^*\|_2^2 \\
\end{eqnarray*}

We use Claim \ref{claim:five} to obtain
$$
(\sqrt{a+x})^{p} \le (1+\frac{\epsilon}{3}) \cdot (\sqrt{x})^p + (1+\frac{6p}{\epsilon})^{p/2} \cdot (\sqrt{a})^p.
$$
Thus, using $\lambda = \epsilon^{1/2+1/p} / (21 p)$ we obtain
\begin{eqnarray*}
(1+\frac{6p}{\epsilon})^{p/2} (\sqrt{a})^p &\le& \frac{\epsilon}{3} \|B_{i*} -B_{i*}' I^T\|_2^p + \frac{(84p)^{2p}}{\epsilon^{p+2}}\|A_{i*}P - A_{i*}P^*\|_2^p\\
& \le & \frac{\epsilon}{3} \|A_{i*} -A_{i*}' I^T\|_2^p + \frac{(84p)^{2p}}{\epsilon^{p+2}}\|A_{i*}P - A_{i*}P^*\|_2^p\\
\end{eqnarray*}
Now it follows that 
$$
(\sqrt{a+x})^{p} - (\sqrt{x})^p \le \frac{\epsilon}{3} \cdot (\sqrt{x})^p +\frac{\epsilon}{3} \|A_{i*} -A_{i*}'\|_2^p + \frac{(84p)^{2p}}{\epsilon^{p+2}}\|A_{i*}P - A_{i*}P^*\|_2^p.\\
$$
The final result follows by summing up over all rows, replacing $(\sqrt{x})^p$ by $\|A_{i*}-A_{i*}'\|_2^p$, and plugging in the bound for $\|AP-AP^*\|_{p,2}^p$.

\end{proof}

\subsection{Optimizing the Running Time}\label{sec:alternative}
We first give an alternative algorithm to our {\sc DimensionalityReduction} algorithm. This algorithm can be implemented using a black box call to an algorithm for finding low dimensional subspaces approximately minimizing the $\|\cdot\|_{p,2}^p$ norm.
\begin{algorithm}[H]
	\caption{Dimensionality Reduction Algorithm II}
	\begin{algorithmic}[1]
		\Procedure{DimensionalityReductionII}{$A,n,d, k, \epsilon, p$}
                 \State $\tau = \Theta(\epsilon^{\max(2/p,1)})$. 
                 \State Choose a random $i^* \in \{1, 2, \ldots, 10/\tau\}$
	         \State Let $S$ be an $i^*k$-dimensional subspace $E$ 
                        with $$\|A(I-P_E)\|_{p,2}^p \leq (1+\Theta(\epsilon^{\max(2/p, 1)}))\min_{\textrm{rank-}ik E'}\|A(I-P_{E'})\|_{p,2}^p.$$
                        Such a space $S$ can be found by Theorem 1 \cite{cw15} with the $k$ there equal 
                        to our $i^*k$, and the $\epsilon$ there can be set to our $\tau$, if $p \in [1,2)$. 
                        The success probability is at least $9/10$. For $p > 2$, one can use the algorithm in
                        \cite{DV07} together with \cite{SV07}. 
                \State  For $i = 1, \ldots, n$, output a $(1 \pm \epsilon)$-approximation to $\|A_i(I-P_S)\|_2$. 
                        These $n$ values can be found by solving $n$ regression problems each with probability $1-1/n^2$, using the
                        regression algorithm of \cite{cw13}. See Lemma \ref{lem:regression2} below.
		\EndProcedure
\end{algorithmic}
\end{algorithm}

\begin{lemma}\label{lem:subspace2}
With probability at least $4/5$, {\sc DimensionalityReductionII} finds a $O(k/\epsilon^{\max(2/p,1)})$-dimensional subspace $S$ for which for all $k$-dimensional spaces $W$, $\|A(I-P_{S})\|_{p,2}^p - \|A(I-P_{S \cup W})\|_{p,2}^p \leq \Theta(\epsilon^{\max(2/p,1)}) \opt$. 
Further, for $p \in [1,2)$, finding such an $S$ can be done in 
$O(\nnz(A) + (n+d)\poly(k/\epsilon) + \exp(\poly(k/\epsilon))$ time, and for $p > 2$, 
can be found in $O(\nnz(A)\poly(k/\epsilon) +(n+d) \poly(k/\epsilon)+ \exp(\poly(k/\epsilon))$ time. 
\end{lemma}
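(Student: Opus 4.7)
The plan is to replace the iterative growth of $S$ from \textsc{DimensionalityReduction} with a single random guess of the correct dimension, exploiting a telescoping argument on the optimal costs. Define $\opt_i := \min_{\text{rank-}ik \text{ subspaces } E}\|A(I-P_E)\|_{p,2}^p$, which is non-increasing in $i$ with $\opt_1 = \opt$ and $\opt_i \ge 0$. The non-negative gaps $\delta_i := \opt_i - \opt_{i+1}$ telescope as $\sum_{i=1}^{10/\tau}\delta_i \le \opt_1 \le \opt$, so at most $1/\tau$ indices can have $\delta_i > \tau\opt$; a uniformly random $i^* \in \{1,\dots,10/\tau\}$ therefore satisfies $\delta_{i^*} \le \tau\opt$ with probability at least $9/10$.

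Conditioning on this event, the subspace-approximation oracle (Theorem~1 of \cite{cw15} for $p\in[1,2)$, or \cite{DV07} for $p>2$) returns an $i^*k$-dimensional subspace $S$ with $\|A(I-P_S)\|_{p,2}^p \le (1+O(\tau))\opt_{i^*}$, succeeding with probability at least $9/10$. By a union bound, both the random-index event and the oracle succeed with probability at least $4/5$. Under this joint event, for any $k$-dimensional $W$ the subspace $S\cup W$ has rank at most $(i^*+1)k$, so $\|A(I-P_{S\cup W})\|_{p,2}^p \ge \opt_{i^*+1}$; hence
\[
\|A(I-P_S)\|_{p,2}^p - \|A(I-P_{S\cup W})\|_{p,2}^p \le (1+O(\tau))\opt_{i^*} - \opt_{i^*+1} \le \delta_{i^*} + O(\tau)\opt_{i^*} \le O(\tau)\opt,
\]
which, after choosing the constant in $\tau = \Theta(\epsilon^{\max(2/p,1)})$ appropriately, is the claimed bound. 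By Remark~\ref{rem:one}, this $S$ can then be used in place of the one produced by the iterative algorithm.

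For the running time, the oracle of \cite{cw15} delivers $S$ (in factored form) in $O(\nnz(A) + (n+d)\poly(k/\epsilon) + \exp(\poly(k/\epsilon)))$ time for $p\in[1,2)$, and \cite{DV07} delivers it in $O(\nnz(A)\poly(k/\epsilon) + \exp(\poly(k/\epsilon)))$ time for $p>2$, as needed. An orthonormal basis $V$ for $S$ is recovered via a QR factorization of the returned low-rank representation in $d\cdot\poly(k/\epsilon)$ time. Finally, the per-row distance estimates $\|A_{i*}(I-P_S)\|_2$ are obtained by solving the $n$ regression problems $\min_x \|A_{i*} - xV^T\|_2$ using CountSketch-based input-sparsity regression as in the introduction, invoking Lemma~\ref{lem:regression2} to certify $(1\pm\epsilon)$-accuracy simultaneously across all rows with failure probability at most $1/n$, well within the asserted running time.

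The main obstacle is the interaction between the \emph{multiplicative} $(1+O(\tau))$ approximation of the oracle and the desired \emph{additive} $O(\tau)\opt$ gap guarantee: one must use that $\opt_{i^*} \le \opt_1 = \opt$ to convert the multiplicative slack into an additive $O(\tau)\opt$ term that merges cleanly with $\delta_{i^*}$. Everything else (random-index selection, union bound, QR, regression invocation) is routine once this conversion is in place.
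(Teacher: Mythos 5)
Your proposal is correct and follows essentially the same approach as the paper: a telescoping sum over the optimal costs $\opt_i$ at ranks $ik$ shows that a random index $i^*$ has a small gap $\delta_{i^*}\le\tau\opt$ with probability $9/10$, a union bound with the oracle's success event gives $4/5$, and the multiplicative $(1+O(\tau))$ slack is converted to an additive $O(\tau)\opt$ term using $\opt_{i^*}\le\opt$. Your explicit $\opt_i$/$\delta_i$ notation is a slightly cleaner presentation of the same argument, and the running-time accounting matches the paper's appeals to \cite{cw15}, \cite{DV07}, and Lemma~\ref{lem:regression2}.
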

\begin{proof}
For $p\in[1,2)$ we condition on the event that the algorithm of \cite{cw15} for computing a $(1+\tau)$-approximation $S$ succeeds, 
which holds with probability at least $9/10$. For $p>2$ we condition on the event that the algorithm of \cite{DV07} for computing a $(1+\tau/10)$-approximation $S$ succeeds, which holds with probability at least $9/20$. In the case of $p>2$ the algorithm of \cite{DV07} only returns a $\poly(k/\epsilon)$-dimensional subspace.
This will suffice to obtain a coreset for parameter $\tau/10$ of size $\poly(k/\epsilon)$ in the stated running time using the construction 
described in this paper. In order to obtain an $i^*k$-dimensional subspace in the claimed running time, we rotate this coreset so that it is
spanned by $\poly(k/\epsilon)$ standard basis vectors and run the algorithm of \cite{SV07} with parameter $\tau/(20k)$
and sufficiently many repetitions so that the best subspace returned is with probability $9/20$ a $(1+\tau)$-approximation for the 
original point set (and where we evaluate the quality of the subspace on the coreset and we can amplify the success probability with the median trick).
Then we reverse the rotation. The above procedure can be implemented in time 
$\nnz(A) \poly(k/\epsilon) +(n+d) \poly(k/\epsilon)+ \exp(\poly(k/\epsilon))$ time.

The algorithm requires time $O(\nnz(A) + (n+d)\poly(k/\epsilon) + \exp(\poly(k/\epsilon))$ for $p \in [1,2)$ and $O(\nnz(A)\poly(k/\epsilon) + \exp(\poly(k/\epsilon)))$ for $p > 2$. 
For each $j \in \{1, 2, \ldots, 10/\tau+1\}$, 
let $V^j$ be the optimal $jk$-dimensional subspace, and consider a telescoping sum:
\begin{eqnarray*}
\opt - \|A(I-P_{V^{10/\tau+1}})\|_{p,2}^p & \geq & \|A(I-P_{V^1})\|_{p,2}^p - \|A(I-P_{V^{20/\tau+1}})\|_{p,2}^p\\
& = & \sum_{i=1}^{10/\tau+1} (\|A(I-P_{V^{i-1}})\|_{p,2}^p - \|A(I-P_{V^i})\|_{p,2}^p)\\
& \geq & 0.
\end{eqnarray*}
There are $10/\tau$ summands in the telescoping sum, and they sum up to at most $\opt$, so a $9/10$-fraction
of them must be at most $\tau \opt$. Let $i^*$ be the index sampled by the algorithm. Then with probability at least $9/10$,
we have $\|A(I-P_{V^{i^*}})\|_{p,2}^p - \|A(I-P_{V^{i^+1}})\|_{p,2}^p \leq \tau \opt$, and let us condition on this event. 

Now, $\tilde{V}^{i^*} \cup W$ is an $(i^*+1)k$-dimensional subspace, and so we have
$\|A(I-P_{\tilde{V}^{i^*} \cup W})\|_{p,2}^p \geq \|A(I-P_{V^{i^*+1}})\|_{p,2}^p$. Also, by the guarantee of $\tilde{V}^{i^*}$, we have
$\|A(I-P_{\tilde{V}^{i^*}})\|_{p,2}^p \leq (1+\tau)\|A(I-P_{V^{i^*}})\|_{p,2}^p \leq (1+\tau)(\|A(I-P_{V^{i^*}})\|_{p,2}^p + \tau \opt )$,
where $\opt$ is the cost of the best $k$-dimensional subspace. Consequently, for any $k$-dimensional subspace $W$, 
\begin{eqnarray*}
\|A(I-P_{\tilde{V}^{i^*}})\|_{p,2}^p - \|A(I-P_{\tilde{V}^{i^*} \cup W})\|_{p,2}^p & \leq & 
(1+\tau)(\|A(I-P_{V^{i^*}})\|_{p,2}^p + \tau \opt) - \|A(I-P_{V^{i^*+1}})\|_{p,2}^p\\
& \leq & O(\tau) \opt,
\end{eqnarray*}
where we used that $\|A(I-P_{V^{i^*}})\|_{1,2} - \|A(I-P_{V^{i^*+1}})\|_{1,2} \leq \tau \opt$ for our choice of $i^*$, 
and also that $\|A(I-P_{V^{i^*}})\|_{p,2}^p \leq \opt$.

Note that the overall success probability is at least $1-1/10 -1/10$ in the first case and $1-1/10-1/20 -1/20 = 4/5$ in the second, and the claimed
running time follows from \cite{cw15},\cite{DV07} and \cite{SV07}. 
\end{proof}

\begin{lemma}\label{lem:regression2}
With probability at least $1-1/n$, {\sc DimensionalityReductionII} outputs a $(1 \pm \epsilon)$-approximation to 
$\|A_i(I-P_S)\|_2$ simultaneously for every $i \in [n]$. 
Further, this can be done in 
$O(\nnz(A)\log n + (n+d)\poly(k/\epsilon) \log n)$ time. 
\end{lemma}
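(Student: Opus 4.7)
\medskip

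\noindent\textbf{Proof plan for Lemma \ref{lem:regression2}.}

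The plan is to recast each distance $\|A_{i*}(I-P_S)\|_2$ as the residual of a small least-squares regression and then to solve all $n$ regressions simultaneously using an input-sparsity-time sketch, repeated $O(\log n)$ times to boost the per-row success probability to $1-1/n^2$. Write $P_S = VV^T$ with $V \in \REAL^{d \times r}$, $r = \poly(k/\epsilon)$, an orthonormal basis for the subspace $S$ returned in Lemma \ref{lem:subspace2}. Such a $V$ can be produced in $d\cdot\poly(k/\epsilon)$ time from the factored representation of $S$. Then for each $i$,
$$
\|A_{i*}(I-P_S)\|_2 \;=\; \min_{x \in \REAL^{1 \times r}} \|A_{i*} - x V^T\|_2,
$$
so the task reduces to solving $n$ overdetermined regressions with a common design matrix $V^T$.

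First I would draw a single CountSketch matrix $\Pi \in \REAL^{d \times m}$ with $m = \poly(k/\epsilon)$ columns, as in \cite{cw13,mm13,nn13}. By the standard oblivious subspace embedding guarantee applied to the $(r+1)$-dimensional column span of $[A_{i*}^T \mid V]$, for each fixed $i$ the inequality
$$
\|(A_{i*} - x V^T)\Pi\|_2 \;=\; (1\pm\epsilon)\,\|A_{i*} - x V^T\|_2 \quad\text{for all } x
$$
holds with constant probability, and hence the minimizer of the sketched regression yields a $(1\pm\epsilon)$-approximation to $\|A_{i*}(I-P_S)\|_2$ with constant probability. To push the failure probability down to $1/n^2$, I would repeat the construction $O(\log n)$ times with independent sketches $\Pi^{(1)},\dots,\Pi^{(O(\log n))}$ and take the median of the resulting estimates per row; a Chernoff bound gives failure probability $\le 1/n^2$ for each row, and a union bound over $i \in [n]$ yields simultaneous correctness with probability at least $1-1/n$.

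For the running time, I would compute $A\Pi^{(\ell)}$ in $O(\nnz(A))$ time (the defining property of CountSketch, which has one nonzero per column) and $V^T \Pi^{(\ell)}$ in $d \cdot \poly(k/\epsilon)$ time. With the $m \times m$ sketched design matrix $V^T\Pi^{(\ell)}$ in hand, I would compute its QR factorization once in $\poly(k/\epsilon)$ time; then each of the $n$ sketched regressions reduces to $\poly(k/\epsilon)$-time arithmetic on the corresponding row of $A\Pi^{(\ell)}$, giving $n\cdot\poly(k/\epsilon)$ additional time per sketch. Summed across the $O(\log n)$ sketches, the total is $O(\nnz(A)\log n + (n+d)\poly(k/\epsilon)\log n)$, matching the claimed bound.

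The main obstacle is the ``for-each'' nature of the guarantee: a single oblivious sketch cannot simultaneously be a subspace embedding for the $n$ different spans $\mathrm{colspan}([A_{i*}^T \mid V])$ with only $\poly(k/\epsilon)$ sketch dimensions, so a naive single-sketch argument fails. The median-of-$O(\log n)$-independent-sketches boosting step, combined with the union bound, is precisely what resolves this difficulty without paying a $\poly(n)$ factor in the sketch size or running time.
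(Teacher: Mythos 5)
Your proposal is correct and follows essentially the same route as the paper: reduce each row distance to a regression $\min_x \|A_{i*} - xV^T\|_2$ against the orthonormal basis $V^T$ of $S$, sketch with a CountSketch of $\poly(k/\epsilon)$ size to get a per-row $(1\pm\epsilon)$ guarantee with constant probability in $\nnz(A) + (n+d)\poly(k/\epsilon)$ time, then boost to $1-1/n^2$ per row via the median of $O(\log n)$ independent repetitions and union bound over the $n$ rows. Your explicit remark about why a single oblivious sketch cannot handle all $n$ spans simultaneously is a nice clarification, but it matches the same median-boosting resolution used in the paper.
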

\begin{proof}
We set up $n$ regression problems, the $i$-th being:
$\min_x \|A_i - xV^T\|_2$, where $P_S = VV^T$ is the orthogonal projection onto $S$, and $V^T$ is an orthonormal basis for $S$ which can be computed from $S$ in $d \cdot \poly(k/\epsilon)$ time. Using input sparsity time algorithms for regression \cite{cw13,mm13,nn13}, if we choose a CountSketch matrix $S$ with 
$\poly(k/\epsilon)$ rows, then with probability at least $9/10$, 
$\|A_i S - xV^TS\|_2 = (1 \pm \epsilon)\|A_i - xV^T\|_2$ for all $x$. We can compute $AS$ in $\nnz(A)$ time and $V^TS$ in $d \cdot \poly(k/\epsilon)$ time, at which point we can solve the $n$ regression problems in $n \cdot \poly(k/\epsilon)$ time, so $\nnz(A) + (n+d)\poly(k/\epsilon)$ total time. We repeat the entire procedure $O(\log n)$ times, and take the median of our $O(\log n)$ estimates, for each $i \in \{1, 2, \ldots, n\}$. This amplifies the success probability
to $1-1/n^2$ for each $i \in [n]$, and a union bound gives a $1-1/n$ probability bound simultaneously for all $i \in [n]$. The total time is $O(\nnz(A) \log n + (n+d)\poly(k/\epsilon) \log n)$. 
%
\end{proof}

We also need the following lemma stating that the approximations returned by Lemma \ref{lem:regression2} suffice. 
\begin{lemma}\label{lem:guarantee2}
Let $B \in \REAL^{n\times (d+1)}$ be a matrix for which for any rank $k$ orthogonal projection matrix $P \in \REAL^{d\times d}$ we have
\begin{eqnarray}\label{eqn:guaranteeChanged}
\big| \|A-AP\|_{p,2}^p - \|B-BIPI^T\|_{p,2}^p \big| \le \epsilon \|A-AP\|_{p,2}^p.
\end{eqnarray}
Suppose we replace the last column $v$ of $B$ with a vector $v'$ for which $v_i = (1 \pm \epsilon)v'_i$ for all $i \in [n]$. Then 
(\ref{eqn:guaranteeChanged}) continues to hold with $\epsilon$ replaced with $O(\epsilon)$. 
\end{lemma}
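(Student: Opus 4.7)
The plan is to show that the per-row squared length $\|(B - BIPI^T)_{i*}\|_2^2$ changes by only a multiplicative $(1 \pm O(\epsilon))$ factor when the last column of $B$ is replaced, then lift this through the $p/2$-th power and sum over $i$. First I would unpack the action of $IPI^T$: since $I \in \REAL^{(d+1)\times d}$ has $1$'s on its first $d$ diagonal entries and zeros elsewhere, the matrix $IPI^T \in \REAL^{(d+1)\times (d+1)}$ applies $P$ to the first $d$ coordinates and annihilates the last. Writing $A' \in \REAL^{n\times d}$ for the submatrix consisting of the first $d$ columns of $B$, this gives the decomposition
$$\|(B - BIPI^T)_{i*}\|_2^2 \;=\; \|A'_{i*}(I-P)\|_2^2 + v_i^2,$$
and the analogous expression with $v'_i$ for the perturbed matrix $\tilde B$. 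The important structural point is that the projection $P$ never touches the appended last coordinate, so each row's contribution splits cleanly into a ``projection part'' and a ``square of the appended value'' part.

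Next, I would use $v_i = (1\pm \epsilon)v'_i$ to write $v_i^2 = (1\pm O(\epsilon))(v'_i)^2$, and then observe that the quantity $\|A'_{i*}(I-P)\|_2^2 \ge 0$ is the same in both decompositions; adding a common nonnegative term to a $(1\pm O(\epsilon))$-multiplicative approximation preserves that approximation, so
$$\|A'_{i*}(I-P)\|_2^2 + v_i^2 \;=\; (1\pm O(\epsilon))\bigl(\|A'_{i*}(I-P)\|_2^2 + (v'_i)^2\bigr).$$
Raising to the $p/2$-th power and using $(1\pm O(\epsilon))^{p/2} = 1\pm O(\epsilon)$ for constant $p$ (and sufficiently small $\epsilon$) transfers this bound to $\|(B - BIPI^T)_{i*}\|_2^p$ versus $\|(\tilde B - \tilde B IPI^T)_{i*}\|_2^p$. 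Summing over $i \in [n]$ then yields
$$\|\tilde B - \tilde B IPI^T\|_{p,2}^p \;=\; (1\pm O(\epsilon))\,\|B - BIPI^T\|_{p,2}^p.$$

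Finally, I would combine this with the hypothesis (\ref{eqn:guaranteeChanged}) by a two-step chaining argument: rewriting the hypothesis as $\|B - BIPI^T\|_{p,2}^p = \|A-AP\|_{p,2}^p \pm \epsilon \|A-AP\|_{p,2}^p$ and multiplying by $(1\pm O(\epsilon))$ preserves an additive $O(\epsilon) \|A-AP\|_{p,2}^p$ error, delivering the desired bound $\big| \|A-AP\|_{p,2}^p - \|\tilde B - \tilde B IPI^T\|_{p,2}^p \big| \le O(\epsilon)\|A-AP\|_{p,2}^p$. There is no genuine obstacle here: the entire argument is elementary algebra, and the only mildly delicate point is recognizing that the appended coordinate enters inside the square root of the per-row norm, so the correct way to propagate an entrywise $(1\pm \epsilon)$ approximation is at the level of squares (then raised to $p/2$), not at the level of the norms themselves.
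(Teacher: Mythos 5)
Your proposal is correct and follows essentially the same route as the paper: the paper asserts directly that each row of $B-BIPI^T$ changes by at most a $(1+O(\epsilon))$ factor (which you justify in more detail via the decomposition $\|(B-BIPI^T)_{i*}\|_2^2 = \|(BI)_{i*}(I-P)\|_2^2 + v_i^2$, noting $P$ never acts on the appended coordinate), and then both arguments conclude with the same chaining of the multiplicative perturbation against the hypothesis (\ref{eqn:guaranteeChanged}) to get an $O(\epsilon)\|A-AP\|_{p,2}^p$ additive error.
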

\begin{proof}
Notice that this operation changes $\|B-BIPI^T\|_{p,2}^p$ by at most a $(1+O(\epsilon))$ factor for constant $p$, since each row changes by at most this factor. 
Letting $B'$ denote the new matrix, we have
\begin{eqnarray*}
|\|A-AP\|_{p,2}^p - \|B'-B'IPI^T\|_{p,2}^p| & = & |\|A-AP\|_{p,2}^p - \|B-BIPI^T\|_{p,2}^p \pm \epsilon \|B-BIPI^T\|_{p,2}^p|\\
& \leq & \epsilon \|A-AP\|_{p,2}^p + \epsilon  \|B-BIPI^T\|_{p,2}^p\\
& \leq& \epsilon \|A-AP\|_{p,2}^p + \epsilon (\|A-AP\|_{p,2}^p + \epsilon \|A-AP\|_{p,2}^p)\\
& \leq & (2\epsilon + \epsilon^2)\|A-AP\|_{p,2}^p,
\end{eqnarray*}
and rescaling $\epsilon$ by a constant factor gives the desired guarantee.
\end{proof}

We cannot afford to compute the projection of $A$ onto $S$, as this could take longer than $\tilde{O}(\nnz(A))$ time. Fortunately, we show in
Section \ref{sec:subspace} that we do not need to compute this. For the $k$-median problem we need the following additional lemma to 
approximate matrix $B$ (see also Remark \ref{rem:one}). We fix an error in equation (2) from an earlier version (see also 
\cite{FKW21}).

\begin{lemma}
\label{lemma:tildeB}
Let $S$ be the subspace guaranteed by Lemma \ref{lem:subspace2}. Given $S$ we can compute in time
$O(\nnz(A)\log n + (n+d)\poly(k/\epsilon))$ a matrix $\tilde{B}$ of rank $O(k/\epsilon^{\max(2/p,1)})$ such that
with probability at least $9/10$ we have for every set $C$ contained in a $k$-dimensional subspace 
$$|\|B-B'I\|_{p,2}^p - \|\tilde{B}-\tilde{B}'I\|_{p,2}^p| \leq \epsilon \|A-A'\|_{p,2}^p.$$
Here $B'$ and $\tilde B'$ are matrices that contain in the $i$-th row 
in the first $d$ coordinates the point from (the closure of) $C$ that is closest
to the i-th row of $BI$ and $\tilde BI$ respectively, and have the $d+1$-coordinate $0$. 
\end{lemma}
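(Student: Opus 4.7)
\medskip
\noindent\textbf{Proof proposal for Lemma \ref{lemma:tildeB}.} The plan is to avoid ever forming $AP_S$ explicitly. Since $S$ has dimension $\ell=O(k/\epsilon^2)$, from the output of \textsc{DimensionalityReductionII} one can extract an orthonormal basis $V\in\REAL^{d\times\ell}$ for $S$ in $d\cdot\poly(k/\epsilon)$ time, so $P_S=VV^T$ and $A_{i*}P_S=(A_{i*}V)V^T$. Thus it is enough to approximate the coefficient vectors $A_{i*}V\in\REAL^{\ell}$. The sketched regression already used in Lemma \ref{lem:regression2} produces such approximations as a by-product: solving $\min_x\|A_{i*}S_{cs}-xV^TS_{cs}\|_2$ (where $S_{cs}$ is a CountSketch with $\poly(k/\epsilon)$ columns, applied once with error parameter $\epsilon'=\Theta((\epsilon/p)^{1+1/p})$) returns a minimizer $\tilde x_i$. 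Define $\tilde B$ in factored form as $\tilde B=[\tilde X V^T,\tilde v]$, where $\tilde X\in\REAL^{n\times\ell}$ has rows $\tilde x_i$ and $\tilde v\in\REAL^n$ is the approximate distance vector from Lemma \ref{lem:regression2}. Clearly $\mathrm{rank}(\tilde B)\le\ell+1=O(k/\epsilon^2)$, and the claimed running time is met: $\tilde O(\nnz(A))$ for the sketch, $d\cdot\poly(k/\epsilon)$ for $V$ and $V^TS_{cs}$, and $n\cdot\poly(k/\epsilon)$ for solving the $n$ small regressions (with $O(\log n)$ repetitions and median amplification as in Lemma \ref{lem:regression2}).

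For accuracy, use that $V$ has orthonormal columns to write, by the Pythagorean identity, $\|A_{i*}-xV^T\|_2^2=\|A_{i*}(I-P_S)\|_2^2+\|x-A_{i*}V\|_2^2$. The regression guarantee then gives $\|A_{i*}-\tilde x_iV^T\|_2\le(1+\epsilon')\|A_{i*}(I-P_S)\|_2$, which yields
$$
\|A_{i*}P_S-\tilde A_{i*}\|_2=\|A_{i*}V-\tilde x_i\|_2\le O(\epsilon')\,v_i,\qquad v_i:=\|A_{i*}(I-P_S)\|_2,
$$
and simultaneously $|\tilde v_i-v_i|\le\epsilon' v_i$. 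Both bounds hold for all $i$ with probability $\ge 1-1/n$ by the median amplification already built into Lemma \ref{lem:regression2}.

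Fix any $k$-dimensional subspace containing $C$. By orthogonality of the extra coordinate,
$$
\|B_{i*}-B'_{i*}I^T\|_2=\sqrt{\dist(A_{i*}P_S,C)^2+v_i^2},\qquad \|\tilde B_{i*}-\tilde B'_{i*}I^T\|_2=\sqrt{\dist(\tilde A_{i*},C)^2+\tilde v_i^2}.
$$
Applying Lemma \ref{lem:numbers} and using that $x\mapsto\dist(x,C)$ is $1$-Lipschitz, the row-wise gap $y_i$ satisfies
$$
y_i:=\bigl|\|B_{i*}-B'_{i*}I^T\|_2-\|\tilde B_{i*}-\tilde B'_{i*}I^T\|_2\bigr|\le\|A_{i*}P_S-\tilde A_{i*}\|_2+|v_i-\tilde v_i|\le O(\epsilon')\,v_i.
$$
Then Claim \ref{claim:five} (applied in both directions) gives $\bigl|\|B_{i*}-B'_{i*}I^T\|_2^p-\|\tilde B_{i*}-\tilde B'_{i*}I^T\|_2^p\bigr|\le\epsilon\,\|B_{i*}-B'_{i*}I^T\|_2^p+(1+2p/\epsilon)^p y_i^p$. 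Summing and using $\sum_i v_i^p=\|A(I-P_S)\|_{p,2}^p\le 2\,\opt$,
$$
(1+2p/\epsilon)^p\sum_i y_i^p\le O\!\left((p/\epsilon)^p\right)\cdot O(\epsilon')^p\,\opt\le \epsilon\,\opt,
$$
for our choice of $\epsilon'$. Since $\|A-A'\|_{p,2}^p\ge\opt$, and $\|B-B'I\|_{p,2}^p\le(1+\epsilon)\|A-A'\|_{p,2}^p$ by the guarantee used to build $B$, rescaling $\epsilon$ by a constant yields the claim.

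\medskip
\noindent\emph{Main obstacle.} The technical issue is the amplification of tiny row-wise errors when raising to the $p$-th power via Claim \ref{claim:five}: the $(1+2p/\epsilon)^p$ blowup forces the per-row regression error to be polynomially small in $\epsilon$ (namely $\epsilon'=\Theta((\epsilon/p)^{1+1/p})$). Since this only rescales the sketch dimension by a $\poly(1/\epsilon)$ factor, it does not affect the overall $\poly(k/\epsilon)$ bound, but it is the reason we cannot simply plug in $\epsilon'=\epsilon$.
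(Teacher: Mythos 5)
Your overall strategy is the one the paper uses: never form $AP_S$ explicitly, but compute for each row a relative-error approximate projection onto $S$, keep $\tilde{B}$ in factored form $[\tilde{X}V^T,\tilde{v}]$, compare $\|B_{i*}-B'_{i*}I^T\|_2$ and $\|\tilde{B}_{i*}-\tilde{B}'_{i*}I^T\|_2$ row by row via Lemma \ref{lem:numbers} and the $1$-Lipschitzness of $\dist(\cdot,C)$, and then sum using $\sum_i \dist(A_{i*},S)^p = O(\opt)\le O(1)\|A-A'\|_{p,2}^p$. The paper obtains the per-row approximate projections by invoking the input-sparsity-time multiple-response regression routine of \cite{ANW14} (Section 2.3 of \cite{w14}), which guarantees $\|\tilde{A}_{i*}-A_{i*}\|_2\le(1+\epsilon)\dist(A_{i*},S)$ simultaneously for all $i$, and then runs the perturbation analysis multiplicatively per row rather than through Claim \ref{claim:five}; otherwise the two arguments are essentially the same.

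Two points in your write-up need repair. First, the amplification step: the median-of-$O(\log n)$ trick in Lemma \ref{lem:regression2} boosts the success probability of the \emph{scalar} cost estimates $\tilde{v}_i$, but it does not apply as stated to the minimizers $\tilde{x}_i$ --- there is no median of candidate vectors, and you cannot afford to evaluate the exact costs $\|A_{i*}-\tilde{x}_i^{(j)}V^T\|_2$ of all $O(\log n)$ candidates (that would cost $\Omega(nd)$). You need either an explicit selection step (estimate each candidate's cost with fresh small sketches and keep the apparent best) or the per-row high-probability guarantee of the routine the paper cites. Second, a calculation: with orthonormal $V$, the Pythagorean identity turns $\|A_{i*}-\tilde{x}_iV^T\|_2\le(1+\epsilon')v_i$ into $\|A_{i*}V-\tilde{x}_i\|_2\le O(\sqrt{\epsilon'})\,v_i$, not $O(\epsilon')\,v_i$. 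This is benign --- it only forces $\epsilon'$ to be a smaller power of $\epsilon$ (roughly $\epsilon'=\Theta((\epsilon/p)^{2+2/p})$ along your Claim \ref{claim:five} route), leaving the rank, time, and size bounds intact --- but the stated bound is off. Finally, note that since $v_i$ is itself a coordinate of the row $B_{i*}-B'_{i*}I^T$, your bound $y_i\le O(\sqrt{\epsilon'})\,v_i\le O(\sqrt{\epsilon'})\|B_{i*}-B'_{i*}I^T\|_2$ already yields a per-row multiplicative $(1\pm O(p\sqrt{\epsilon'}))$ estimate; this sidesteps Claim \ref{claim:five} and the $(1+2p/\epsilon)^p$ blow-up you identify as the main obstacle, and is essentially how the paper's proof proceeds.
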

\begin{proof}
Let $P = VV^T$ be the orthogonal projection onto $S$. We run the algorithm of 
\cite{ANW14} (see also Section 2.3 of \cite{w14}) with parameter $\epsilon^2/3$
 which gives us with probability at least $9/10$ in time $O(\nnz(A)\log n + (n+d)\poly(k/\epsilon))$, 
simultaneously for each $i \in [n]$, a vector $\tilde{A}_i = X_i V^T$ 
for which $\|\tilde{A}_i - A_i\|_2 \leq (1+\epsilon^2/3)\|A_iP - A_i\|_2$ and so 
$\|\tilde{A}_i - A_i\|_2^2 \leq (1+\epsilon^2/3)^2 \|A_iP-A_i\|_2^2 \leq (1+\epsilon^2) \|A_iP-A_i\|_2^2$.
By Pythagorean theorem we have, 
\begin{eqnarray}\label{eqn:regression}
\|\tilde{A}_i - A_iP\|_2^2 = \|A_i - \tilde{A}_i\|_2^2 - \|A_i - A_iP\|_2^2 \leq \epsilon^2 \|A_i-A_iP\|_2^2.
\end{eqnarray}
Taking the root implies $\|\tilde{A}_i - A_iP\|_2 \leq \epsilon \|A_i - A_iP\|_2$.
By the triangle inequality, $|\dist(A_iP, C) - \dist(\tilde{A}_i, C)| \leq \dist(A_iP, \tilde{A}_i)$, and combining with (\ref{eqn:regression}),
this implies 
\begin{eqnarray}\label{eqn:regression2}
|\dist(A_iP, C) - \dist(\tilde{A}_i, C)|\leq \epsilon \cdot \dist(A_i, A_iP),
\end{eqnarray}
for each $i \in [n]$. 
The $i$-th row $\tilde{B}_i$ of $\tilde{B}$ consists of the coordinates of $\tilde{A}_i$ followed by the single coordinate
of value $(1 \pm \epsilon) \dist(A_i, S)$. Consequently, for each $i$:
\begin{eqnarray}\label{eqn:regression3}
\|\tilde{B}_i - (\tilde{B}'I)_i\|_2^p = (\dist(\tilde{A}_i, C)^2 + (1 \pm \epsilon)\dist(A_i, S)^2)^{p/2},
\end{eqnarray}
and so,
\begin{eqnarray*}
|\tilde{B}_i - (\tilde{B}'I)_i\|_2^p & = & (\dist(\tilde{A}_i, C)^2 + (1 \pm \epsilon) \dist(A_i, S)^2)^{p/2}\\
& = & ((\dist(A_iP, C) \pm \epsilon \dist(A_i, A_iP))^2 + (1 \pm \epsilon)\dist(A_i, S)^2)^{p/2}\\
& = & (\dist(A_iP, C)^2 \pm O(\epsilon) \dist(A_iP, C)d(A_i, A_iP) \\
& & \pm O(\epsilon^2)d(A_i, A_iP)^2 + (1 \pm \epsilon)\dist(A_i, S)^2)^{p/2}\\
& = & (\|B_i - (B'I)_i\|_2^2 \pm O(\epsilon)(\dist(A_i, S)^2 \\
&& + \dist(A_iP, C)d(A_i, S) \pm O(\epsilon \cdot \dist(A_i, S)^2)))^{p/2}\\
& = & (\|B_i - (B'I)_i\|_2^2 \pm O(\epsilon)(\dist(A_i, S)^2 + \dist(A_iP, C)^2))^{p/2}\\
& = & (1 \pm O(\epsilon))\|B_i - (B'I)_i\|_2^p
\end{eqnarray*}
where the first equality uses the definition of $B$, the second equality uses (\ref{eqn:regression2}), 
the third equality just expands the square, the fourth equality uses (\ref{eqn:regression3}) and that $\dist(A_i, A_iP) = \dist(A_i, S)$, 
the penultimate equality absorbs the previous equality in the asymptotic notation $O(\epsilon)$, and the final equality uses
that $\|B_i - (B'I)_i\|_2^2 = \dist(A_iP, C)^2 + \dist(A_i, S)^2$. 

Hence, $||\tilde{B}_i - (\tilde{B}'I)_i\|_2^2 - \|B_i - (B'I)_i\|_2^p| = O(\epsilon)\|B_i - (B'I)_i\|_2^p$, 
and so 
\begin{eqnarray}\label{eqn:relative}
|\|B-B'I\|_{p,2}^p - \|\tilde{B}-\tilde{B}'I\|_{p,2}^p| = O(\epsilon)\|B-(B'I)\|_{p,2}^p.
\end{eqnarray}
Now, for a given $i$, 
$\|B_i - (B'I)_i\|_2^2 = \dist(A_iP, C)^2 + \dist(A_i, S)^2$. Notice that $\dist(A_iP, C) \leq \dist(A_i, C) + \dist(A_iP, A_i) = \dist(A_i, C) + \dist(S, A_i)$, 
and consequently $\|B_i - (B'I)_i\|_2^2 = O(\dist(A_i, C)^2 + \dist(A_i, S)^2)$, and plugging into (\ref{eqn:relative}),
\begin{eqnarray*}
|\|B-B'I\|_{p,2}^p - \|\tilde{B}-\tilde{B}'I\|_{p,2}^p| & = & O(\epsilon)\sum_{i=1}^n (\dist(A_i, C)^2 + \dist(A_i, S)^2)^{p/2}\\
& = & O(\epsilon)(\sum_{i=1}^n \dist(A_i, C)^p + \sum_{i=1}^n \dist(A_i, S)^p)\\
& = & O(\epsilon)(\sum_{i=1}^n \dist(A_i, C)^p)\\
& = & O(\epsilon)\|A-A'\|_{p,2}^p,
\end{eqnarray*}
where the first equality follows by plugging into (\ref{eqn:relative}), the second equality
follows from $(\dist(A_i, C)^2 + \dist(A_i, S)^2)^{p/2} \leq (2\max(\dist(A_i, C)^2, \dist(A_i, S)^2))^{p/2} 
\leq 2^{p/2} \max(\dist(A_i, C)^p, \dist(A_i, S)^p) \leq 2^{p/2} (\dist(A_i, C)^p + \dist(A_i, S)^p)$, 
the third equality follows from $\sum_i \dist(A_i, S)^p \leq (1+\epsilon) \sum_i \dist(A_i, C)^p$ by definition of $S$
and the fact that $C$ is contained in a $k$-dimensional subspace, and the final equality just uses the definition
of $\|A-A'\|_{p,2}^p$. 
\end{proof}

\section{Coresets for Subspace Approximation}\label{sec:subspace}

Plugging in the guarantee of Lemma \ref{lem:subspace2} into Remark \ref{rem:one} shows how to obtain an $n \times (d+1)$ matrix $B$ for which
\begin{eqnarray}\label{eqn:guarantee}
|\|A-AP\|_{p,2}^p - \|B-BIPI^T\|_{p,2}^p| \le \epsilon \|A-AP\|_{p,2}^p,
\end{eqnarray}
for all rank-$k$ orthogonal projection matrices $P$. Lemma \ref{lem:subspace2} shows how to efficiently find the $S$
for which $B = AP_S$, and Lemma \ref{lem:regression2} shows how to efficiently find an approximate vector $v$ of $(d+1)$-st coordinates of $B$. Further,
after scaling $\epsilon$ by a constant factor, Lemma \ref{lem:guarantee2} shows that 
(\ref{eqn:guarantee}) continues to hold with the approximate vector $v$ of $(d+1)$-st coordinates of $B$ furnished by Lemma \ref{lem:regression2}. 

The main issue is that we cannot afford to compute the projection of $A$ onto $S$ to form the first $d$ columns of $B$. 
Although the matrix $B$ is $n \times (d+1)$, it has rank $poly(k/\epsilon)$. Thinking of its rows as 
$n$ points in $\mathbb{R}^{d+1}$, we would like to find a {\it weighted subset} $TB$ of these $n$ points
so that $\|TB-TBIPI^T\|_{1,2} = (1 \pm \epsilon)\|B-BIPI^T\|_{1,2}$ for all rank-$k$ orthogonal projection matrices $P$.
Here, $T$ is called a sampling and rescaling matrix, and our goal will be to find such a $T$ for which each row of 
$T$ contains a single non-zero non-negative entry, corresponding to the sampled row of $B$, rescaled
by a non-negative value. 

Recall that $B$ has a particular form, namely, $B = [AP_S, v]$, where $P_S$ is the orthogonal projection
onto the $\poly(k/\epsilon)$-dimensional subspace found by our {\sc DimensionalityReduction} or 
{\sc DimensionalityReductionII} algorithm, and 
$v_i = (1 \pm \epsilon) \|A_{i*} - A'_{i*}\|_{2}$,
where $A' = AP_S$ for $i = 1, 2, \ldots, n$. Since $S$ is $\poly(k/\epsilon)$-dimensional, we can write $P_S = UU^T$, where
$U$ is a $d \times \poly(k/\epsilon)$ matrix with orthonormal columns. 

Before showing how to find a sampling and rescaling matrix $T$, we first need the following theorem due to Dvoretsky.
\begin{fact}(Dvoretsky's Theorem, see Variations and Extensions on p.30 of \cite{m13})\label{thm:dvoretsky}
Let $t \geq C d \log(1/\epsilon)/\epsilon^2$ for a sufficiently large constant $C > 0$, and suppose $G$ is a $d \times t$
matrix of i.i.d. $N(0,1/t)$ random variables, where $N(0, 1/t^2)$ denotes a normal random variable with mean $0$ and variance $1/t$. 
Then with probability at least $99/100$, simultaneously
for all $x \in \mathbb{R}^d$, $\|x\|_2 = (1 \pm \epsilon)\|xG\|_1$. In particular, there exists such a matrix $G$
for which this property holds for all $x \in \mathbb{R}^d$. For $p > 1$, there is a $d \times t'$ matrix
$G$ of i.i.d. normal random variables, suitably scaled, with $t' = (d/\epsilon)^{O(p)}$ for which
for all $x \in \mathbb{R}^d$, $\|x\|_2 = (1 \pm \epsilon)\|xG\|_p$. 
\end{fact}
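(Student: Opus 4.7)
The plan is to prove this by the standard combination of Gaussian concentration for a single vector plus a net argument on the unit sphere. First I would handle a fixed unit vector $x\in\mathbb{R}^d$: assuming the entries of $G$ are scaled so that $\mathbb{E}[\|xG\|_1]=\|x\|_2$ for unit $x$, the quantity $\|xG\|_1=\sum_{i=1}^t|g_i|$ is a sum of $t$ i.i.d.\ sub-Gaussian random variables $|g_i|$ (each with variance of order $1/t$). A Bernstein or sub-Gaussian tail bound gives
$$
\Pr\bigl[\bigl|\|xG\|_1-\|x\|_2\bigr|>\epsilon\|x\|_2\bigr]\le 2\exp(-c\,\epsilon^2 t)
$$
for an absolute constant $c>0$.

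Next I would take a standard $(\epsilon/3)$-net $\mathcal{N}$ of the unit sphere $S^{d-1}\subseteq\mathbb{R}^d$, of cardinality at most $(9/\epsilon)^d$. A union bound over $\mathcal{N}$ shows that with probability at least $1-2(9/\epsilon)^d\exp(-c\epsilon^2 t)$, every $y\in\mathcal{N}$ satisfies $\|yG\|_1=(1\pm\epsilon)\|y\|_2$. Choosing $t\ge Cd\log(1/\epsilon)/\epsilon^2$ with $C$ sufficiently large makes this probability at least $99/100$.

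I would then extend the guarantee from the net to all $x\in S^{d-1}$ by the usual \emph{successive approximation} trick: write $M=\sup_{x\in S^{d-1}}\|xG\|_1$; for any $x\in S^{d-1}$ pick $y\in\mathcal{N}$ with $\|x-y\|_2\le\epsilon/3$ and use linearity to get $\|xG\|_1\le\|yG\|_1+\|(x-y)G\|_1\le(1+\epsilon)+(\epsilon/3)M$. Taking the supremum yields $M\le(1+\epsilon)/(1-\epsilon/3)=1+O(\epsilon)$, and the matching lower bound on $\|xG\|_1$ follows symmetrically by reverse triangle inequality. Rescaling $\epsilon$ by a constant recovers the stated $(1\pm\epsilon)$ bound, and because the guarantee holds with positive probability there exists at least one deterministic $G$ for which it holds simultaneously for all $x\in\mathbb{R}^d$.

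The one nontrivial piece is the extension to $p>1$. Here one uses that for a suitably rescaled Gaussian vector in $\mathbb{R}^{t'}$, the quantity $\|xG\|_p^p$ concentrates around its mean with sub-exponential tails whose rate depends on $p$ only through the growth of Gaussian moments (which is $p^{\Theta(p)}$); this is the source of the blow-up from $t=O(d/\epsilon^2)$ to $t'=(d/\epsilon)^{O(p)}$. The same three-step template (pointwise concentration, net on $S^{d-1}$, approximation from the net using $\|\cdot\|_p$ in place of $\|\cdot\|_1$ in the triangle-inequality step) then carries through. The main obstacle will be keeping track of the $p$-dependent constants in the concentration step and verifying that $t'=(d/\epsilon)^{O(p)}$ suffices to absorb both the worse tail behavior and the $\log|\mathcal{N}|=\Theta(d\log(1/\epsilon))$ factor from the union bound; for constant $p$ this is routine, which is why the statement restricts to constant $p$.
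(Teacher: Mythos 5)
The paper does not actually prove this Fact --- it is invoked as a known result with a citation to Matou\v{s}ek's notes --- so there is no internal proof to compare against; what you have written is the standard proof of the Gaussian form of Dvoretsky's theorem, and it is correct in outline. Your three steps (sub-Gaussian concentration of $\|xG\|_1$ for a fixed unit vector at rate $e^{-c\epsilon^2 t}$, a union bound over an $(\epsilon/3)$-net of size $(9/\epsilon)^d$, and the self-referential successive-approximation argument bounding $M=\sup_{x\in S^{d-1}}\|xG\|_1$ via homogeneity) are exactly the classical argument, and your implicit correction of the normalization (the statement's $N(0,1/t)$ entries would give $\mathbb{E}\|xG\|_1\approx\sqrt{2t/\pi}$, so one must rescale by $\Theta(1/\sqrt{t})$ so that $\mathbb{E}\|xG\|_1=\|x\|_2$; the paper's own wording of the variance is already inconsistent) is the right reading of the statement. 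For $p>1$ you correctly identify the only nontrivial bookkeeping: $\||g|^p\|$ has sub-Weibull rather than sub-Gaussian tails, so the Bernstein-type bound degrades with the growth of Gaussian moments, and one needs $t'$ large enough that both this degradation and the $\Theta(d\log(1/\epsilon))$ net entropy are absorbed, which for constant $p$ is comfortably within $t'=(d/\epsilon)^{O(p)}$; note also that it suffices to control $\|xG\|_p^p$ to within $1\pm\Theta(p\epsilon)$, since taking $p$-th roots only improves the relative error. So the proposal fills in, by the standard route, a statement the paper leaves to the literature.
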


\begin{lemma}(Sampling Lemma) \label{lem:sampling}
Given $S$, in $n \cdot \poly(k (\log n)/\epsilon)$ time it is possible to find a sampling and rescaling matrix $T$ with 
$O(\textrm{rank}(S) \log(\textrm{rank(S)}/\epsilon^2)$ rows
for which for all rank-$k$ orthogonal projection matrices $P$, 
$$\|TB-TBIPI^T\|_{1,2} = (1 \pm \epsilon)\|B-BIPI^T\|_{1,2}.$$
Instantiating $S$ with the output of our {\sc DimensionalityReduction} algorithm, $T$ would have
$O(k \log(k/\epsilon) / \epsilon^4)$ rows. Instantiating $S$ with the output of our {\sc DimensionalityReductionII} algorithm, $T$
would have $\poly(k/\epsilon)$ rows. 

For constant $p > 1$, it is possible to find a sampling and rescaling matrix $T$ with 
$\poly(\textrm{rank}(S)/\epsilon)$ rows for which for all rank-$k$ orthogonal projection matrices $P$,
$$\|TB-TBIPI^T\|_{p,2}^p = (1 \pm \epsilon)\|B-BIPI^T\|_{p,2}^p.$$
Instantiating $S$ with the output of our {\sc DimensionalityReduction} or {\sc DimensionalityReductionII} algorithms, 
$T$ would have $\poly(k/\epsilon)$ rows.
\end{lemma}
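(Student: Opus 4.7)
The plan is to execute the three-step chain sketched in the introduction: embed the row-wise $\ell_2$ norms into the entrywise $\ell_1$ norm via Dvoretsky's theorem, reduce the number of rows via Lewis weight sampling applied to the embedded matrix, and then undo the Dvoretsky embedding on the subsampled matrix.

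For the first step, Fact \ref{thm:dvoretsky} supplies a $(d+1) \times t$ Gaussian matrix $G$ with $\|x\|_2 = (1\pm\epsilon)\|xG\|_1$ for every $x \in \REAL^{d+1}$. Applying this row-by-row to $B - BIPI^T$ yields
$$\|B - BIPI^T\|_{1,2} = (1\pm\epsilon)\|BG - BIPI^T G\|_{1,1}$$
uniformly over all rank-$k$ projections $P$. Since $G$ is only an analysis tool, its column count $t$ does not enter the running time.

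For the second step, observe that every column of $BG - BIPI^T G$ lies in the column span of $B$: for each column $g_j$ of $G$, the corresponding column is $Bg_j - BIPI^T g_j$, and $IPI^T$ is just a linear map acting on the right. Because $B = [AP_S, v]$ with $P_S = UU^T$ of rank $r = \textrm{rank}(S)$, this column span has dimension at most $r+1$ and is spanned by the columns of the $n \times (r+1)$ factor $M = [AU, v]$. Invoking the $\ell_1$ Lewis weight sampling theorem of \cite{cp15} on this column span produces a sampling-and-rescaling matrix $T$ with $O(r \log(r)/\epsilon^2)$ rows that preserves $\ell_1$ norms of every vector in the span up to $(1\pm\epsilon)$. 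Applied column-by-column, this gives $\|TBG - TBIPI^T G\|_{1,1} = (1\pm\epsilon)\|BG - BIPI^T G\|_{1,1}$ for every $P$.

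For the third step, note that the rows of $TB - TBIPI^T$ also lie in $\REAL^{d+1}$, so the same $G$ yields $\|TBG - TBIPI^T G\|_{1,1} = (1\pm\epsilon)\|TB - TBIPI^T\|_{1,2}$. Chaining the three $(1\pm\epsilon)$ equalities and rescaling $\epsilon$ by a constant gives the desired guarantee. For constant $p > 1$, replacing both tools by their $\ell_p$ analogs (the $\ell_p$ Dvoretsky embedding from Fact \ref{thm:dvoretsky} and the $\ell_p$ Lewis weight sampling of \cite{cp15}, which yields $\poly(r/\epsilon)$ rows) gives the corresponding $\ell_{p,2}$ guarantee. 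The running time claim follows because we never materialize $G$: the only real computation is the Lewis weight construction, which reduces to $O(\log \log n)$ approximate $\ell_2$ leverage-score computations on weighted versions of the $n \times (r+1)$ matrix $M$, each of which can be carried out in $n \cdot \poly(k(\log n)/\epsilon)$ time. The main thing to be careful about is that the two applications of Dvoretsky use the same $G$ and that the rows of both $B - BIPI^T$ and $TB - TBIPI^T$ live in the single fixed space $\REAL^{d+1}$, so that the existential Dvoretsky guarantee applies uniformly on both ends.
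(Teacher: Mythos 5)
Your proof is correct and follows essentially the same route as the paper: a row-wise Dvoretsky embedding of $\|\cdot\|_{p,2}$ into the entrywise norm, Lewis weight sampling of \cite{cp15} applied to the low-dimensional column span (never materializing $G$), a second application of Dvoretsky to the sampled rows, and a running-time reduction to $O(\log\log n)$ approximate $\ell_2$ leverage-score computations. Your only deviation is to take the spanning factor as $[AU, v]$ rather than $AU$ alone, which is in fact slightly more careful (the last column $v$ of $B$ need not lie in the column span of $AU$) and changes nothing asymptotically.
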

\begin{proof}
Let $t$ be as in Fact \ref{thm:dvoretsky}, and fix the $d \times t$ matrix $G$ of that fact.
Applying the guarantee of Fact \ref{thm:dvoretsky} to each row of $B-BIPI^T$, 
\begin{eqnarray}\label{eqn:proof1}
\|B-BIPI^T\|_{p,2}^p = (1 \pm \Theta(\epsilon)) \|BG-BIPI^TG\|_{p,p}^p, 
\end{eqnarray}
where for a matrix $C$, 
$\|C\|_{p,p}^p$ denotes the sum of $p$-th powers of absolute values of its entries. 

We next apply Theorem 1 of \cite{cp15}, which shows how, given a matrix $C$ with $f$ columns,
to find a sampling and rescaling matrix $T$ with $O(f \epsilon^{-2} \log f)$ rows for $p = 1$,
and $(f \epsilon^{-1})^{O(p)}$ rows for $p > 1$, 
for which with high probability, simultaneously
for all $x \in \mathbb{R}^t$, $\|TCx\|_p = (1 \pm \epsilon) \|Cx\|_p$. Further, the time 
to find $T$ is $O(\log \log n)$ calls to computing $2$-approximate statistical leverage
scores of matrices of the form $WC$, where $W$ is a non-negative diagonal matrix. Using the algorithm
of Theorem 29 of \cite{cw13}, $T$ can be computed in $\tilde{O}(\nnz(C))$ time. 

Instantiating the matrix $C$ of the previous paragraph with the $n \times O(k/\epsilon^2)$ 
matrix $AU$, where recall $P_S = UU^T$ and $U$ has $k/\epsilon^2$ columns if $S$ is the output of our 
{\sc DimensionalityReduction} algorithm, it follows that $\|TAUx\|_1 = (1\pm \epsilon)\|AUx\|_1$
for all $x \in \mathbb{R}^{O(k/\epsilon^2)}$, and $T$ has ${O(k \log(k/\epsilon)/\epsilon^4)}$ rows. If $S$ is the output
of our {\sc DimensionalityReductionII} algorithm or $p > 1$, then $T$ has $\poly(k/\epsilon)$ rows. The overall time
is $n \cdot \poly(k(\log n)/\epsilon)$. 

Consequently, 
\begin{eqnarray}\label{eqn:proof2}
\|TBG-TBIPI^TG\|_{p,p}^p = (1 \pm \epsilon)\|BG-BIPI^TG\|_{p,p}^p, 
\end{eqnarray}
since each column
of $BG$ is in the column span of $AU$. Again applying the guarantee of Fact \ref{thm:dvoretsky}
to each row of $TBG-TBIPI^TG$, we have 
\begin{eqnarray}\label{eqn:proof3}
\|TBG-TBIPI^TG\|_{p,p}^p = (1 \pm \epsilon)\|TB-TBIPI^T\|_{p,2}^p.
\end{eqnarray}
Combining (\ref{eqn:proof1}), (\ref{eqn:proof2}), and (\ref{eqn:proof3}), we have
$$\|B-BIPI^T\|_{p,2}^p = (1 \pm O(\epsilon))\|TB-TBIPI^T\|_{p,2}^p,$$
and the guarantee of the lemma follows by rescaling $\epsilon$ by a constant factor. 
\end{proof}

\begin{theorem}(Strong Coreset for Subspace Approximation)\label{thm:subspaceApprox}
For $p = 1$, it is possible to find a matrix 
$TB \in \mathbb{R}^{O(k (\log k)/\epsilon^4) \times d+1}$ for which for all rank-$k$ orthogonal projection 
matrices $P$, 
\begin{eqnarray}\label{eqn:guaranteeZap}
|\|A-AP\|_{p,2} - \|TB-TBIPI^T\|_{p,2}| \le \epsilon \|A-AP\|_{p,2}.
\end{eqnarray}
Further, in $\tilde{O}(\nnz(A) + (n+d)\poly(k/\epsilon) + \exp(\poly(k/\epsilon))$ time, 
it is possible to find a matrix
$TB \in \mathbb{R}^{\poly(k/\epsilon) \times d+1}$ satisfying (\ref{eqn:guaranteeZap}) for $p \in [1,2)$ 
for all rank-$k$ orthogonal projection matrices $P$. 

Finally, in $\nnz(A)\poly(k/\epsilon) + \exp(\poly(k/\epsilon))$ time, it is possible to find a matrix
$TB \in \mathbb{R}^{\poly(k/\epsilon) \times d+1}$ satisfying (\ref{eqn:guaranteeZap}) for $p > 2$ 
for all rank-$k$ orthogonal projection matrices $P$. 
\end{theorem}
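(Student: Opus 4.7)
The plan is to chain together the dimensionality reduction (Corollary \ref{cor:useful} for $p=1$ and its $p$-th power analogue, combined with Remark \ref{rem:one} and Lemma \ref{lem:subspace2}) with the Sampling Lemma (Lemma \ref{lem:sampling}). First I would argue correctness abstractly, then address how to obtain the stated running times without ever writing down the dense $n\times d$ matrix $B=[AP_S,v]$.

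For correctness, the dimensionality reduction produces an $n\times(d+1)$ matrix $B$ of rank $\poly(k/\epsilon)$ satisfying
\[
\bigl|\|A-AP\|_{p,2}^p - \|B-BIPI^T\|_{p,2}^p\bigr| \le \epsilon\|A-AP\|_{p,2}^p
\]
for every rank-$k$ orthogonal projection $P$. Applying Lemma \ref{lem:sampling} to $B$ yields a sampling and rescaling matrix $T$ with $\poly(k/\epsilon)$ rows (for $p=1$, $O(k\log(k/\epsilon)/\epsilon^4)$ rows) so that $\|TB-TBIPI^T\|_{p,2}^p = (1\pm\epsilon)\|B-BIPI^T\|_{p,2}^p$ for every such $P$. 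Composing the two approximations, rescaling $\epsilon$ by a constant, and taking $p$-th roots yields (\ref{eqn:guaranteeZap}). For $p=1$ this immediately gives the first claim with the stated coreset size.

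For the running time, the main obstacle is that we cannot afford the $\Omega(nd)$ time needed to form $AP_S$ explicitly. I would exploit the factored form $B=[AU\cdot U^T,\,v]$, where $P_S=UU^T$ and $U\in\REAL^{d\times \poly(k/\epsilon)}$ has orthonormal columns. First compute $U$ from the output of {\sc DimensionalityReductionII} in $d\cdot\poly(k/\epsilon)$ time; then compute the vector $v$ of approximate distances via Lemma \ref{lem:regression2} in $\tilde O(\nnz(A)+(n+d)\poly(k/\epsilon))$ time, and invoke Lemma \ref{lem:guarantee2} to argue this perturbation costs only a constant factor in $\epsilon$. The Lewis-weight sampling step inside Lemma \ref{lem:sampling} only requires an input-sparsity leverage-score routine applied to $AU$ (which has $\poly(k/\epsilon)$ columns), so it runs in $\tilde O(\nnz(A)+(n+d)\poly(k/\epsilon))$ time. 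Once $T$ is fixed it selects only $\poly(k/\epsilon)$ rows of $A$, and for each selected row $A_i$ we can form its first $d$ coordinates of $B$ by computing $(A_iU)U^T$ in $d\cdot\poly(k/\epsilon)$ time per row, and append the corresponding entry of $v$. This gives $\tilde O(\nnz(A)+(n+d)\poly(k/\epsilon))+\exp(\poly(k/\epsilon))$ total time for $p\in[1,2)$.

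For $p>2$ the identical argument applies, except the underlying $(1+\tau)$-approximate $ik$-dimensional subspace used inside {\sc DimensionalityReductionII} must be computed via the \cite{DV07} algorithm rather than \cite{cw15}, which costs $\nnz(A)\poly(k/\epsilon)+\exp(\poly(k/\epsilon))$; everything downstream (regression-based distances, Lewis-weight sampling on $AU$, expanding the selected rows) is unchanged in order of magnitude and is dominated by this term. The delicate part of the argument is verifying that replacing the true $B$ by the factored/approximate version does not degrade the coreset guarantee: this is handled by Lemma \ref{lem:guarantee2} for the perturbation of $v$, and by the observation that the sampling guarantee of Lemma \ref{lem:sampling} depends only on the column span of $B$, which is preserved exactly by the factored representation $AU$. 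Combining these with a union bound over the $O(1)$ high-probability events from {\sc DimensionalityReductionII}, Lemma \ref{lem:regression2}, and Lemma \ref{lem:sampling} completes the proof.
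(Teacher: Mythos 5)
Your proposal is correct and follows essentially the same route as the paper: obtain $B$ satisfying the dimensionality-reduction guarantee (via Remark \ref{rem:one} with Lemma \ref{lem:subspace2}, plus Lemmas \ref{lem:regression2} and \ref{lem:guarantee2} for the approximate last column), apply the Sampling Lemma \ref{lem:sampling} to get $T$, compose the two relative-error bounds with a rescaling of $\epsilon$, and avoid forming $AP_S$ explicitly by keeping $B$ in the factored form $(AU)U^T$ and projecting only the $\poly(k/\epsilon)$ sampled rows. Your explicit remark about taking $p$-th roots to pass from the $\|\cdot\|_{p,2}^p$ guarantee to (\ref{eqn:guaranteeZap}) is a small point the paper leaves implicit, but otherwise the arguments coincide.
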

\begin{proof}
We start by proving the structural part of the theorem, and then address the running time. 

Let $B$ be the output of {\sc DimensionalityReduction}, which has property (\ref{eqn:guarantee}). 
As described above, we can assume {\sc DimensionalityReduction}  
produces $B = [AP_S, v]$, where $P_S$ and $v$ are described above. Further, we
can assume $AP_S$ is given in factored form $(AU) U^T$ for $P_S = UU^T$. 

By Lemma \ref{lem:sampling}, we can find a sampling and rescaling
matrix $T$ 
for which for all rank-$k$ orthogonal projection matrices $P$, 
$\|TB-TBIPI^T\|_{p,2}^p = (1 \pm \epsilon)\|B-BIPI^T\|_{p,2}^p$, so 
\begin{eqnarray*}
|\|A-AP\|_{p,2}^p - \|TB-TBIPI^T\|_{p,2}^p| & = & |\|A-AP\|_{p,2}^p - \|B-BIPI^T\|_{p,2}^p| \pm \epsilon \|B-BIPI^T\|_{p,2}^p\\
& \leq & \epsilon \|A-AP\|_{p,2}^p + \epsilon  \|B-BIPI^T\|_{p,2}^p\\
& \leq& \epsilon \|A-AP\|_{p,2}^p + \epsilon (\|A-AP\|_{p,2}^p + \epsilon \|A-AP\|_{p,2}^p )\\
& \leq & (2\epsilon + \epsilon^2)\|A-AP\|_{p,2}^p,
\end{eqnarray*}
and rescaling $\epsilon$ by a constant factor gives the desired guarantee. We note that By Lemma \ref{lem:sampling},
$T$ will have $O(k (\log k)/\epsilon^4)$ rows for $p = 1$ if we run {\sc DimensionalityReduction}.

For an efficient algorithm, we instead run {\sc DimensionalityReductionII} 
in $\tilde{O}(\nnz(A) + (n+d)\poly(k/\epsilon) + \exp(\poly(k/\epsilon))$ time for $p \in [1,2)$, 
and in $\nnz(A)\poly(k/\epsilon) + \exp(\poly(k/\epsilon))$ time for $p > 2$, 
to obtain $S$, which is $\poly(k/\epsilon)$-dimensional,
and in $n \cdot \poly(k/\epsilon)$ time we can compute $U$, where $P_S = UU^T$. The correctness is given by Lemma \ref{lem:subspace2}. We
also compute the $(d+1)$-st column of $B$ in $\tilde{O}(\nnz(A) + (n+d)\poly(k/\epsilon))$ time. By Lemma \ref{lem:sampling},
in $n \cdot \poly(k (\log n)/\epsilon)$ time we can find a sampling and rescaling matrix $T$. 
Finally, $T$ selects $\poly(k/\epsilon)$ rows of $A$, and for each we compute its projection onto $S$,
taking $d \cdot \poly(k/\epsilon)$ time in total. We also output the corresponding entry of $v$. We thus obtain our coreset
$TB$ in the stated running times. 
\end{proof}

%
%
%

\section{Coresets for $k$-Median}\label{sec:kmedian}

We can combine our dimensionality reduction with coreset computations. We first compute the matrix $B$ using
our dimensionality reduction. Recall that this matrix has rank $O(k/\epsilon^2)$. We can therefore compute an orthogonal
basis for the span of $B$ and add $k$ arbitrary dimensions. Then we apply an arbitrary coreset construction in the resulting
space. We will only be interested in the space spanned by the first $d$ dimensions of $B$ (recall that the last dimension is only needed
to ``adjust" the distances). Since the Euclidean distance does not change under orthogonal transformations we can rotate any set of centers to
our subspace without changing the distances. Therefore, the coreset will be a coreset for the whole input space. 

\begin{theorem}
Let $1 \ge \epsilon >0$.
Given a matrix $A \in \REAL^{n\times d}$. We can compute in time $\tilde{O}(\nnz(A) + (n+d)\poly(k/\epsilon) + \exp(\poly(k/\epsilon)))$ a matrix 
$T \in \REAL^{s \times (d+1)}$, $s=O(\frac{k^2 \log k}{\epsilon^4})$, and non-negative weights $w_1,\dots, w_s$ such that 
with probability at least $3/5$ for every set $C$ of $k$ centers we have
$$
\big| \|A-A^C\|_{1,2} - \sum_{i=1}^s w_i \|T_{i*} - T_{i*}^C \|_{2} \big| \le \epsilon \|A-A^C\|_{1,2}
$$
Here $A^C$ contains for each row of $A$ the nearest center of $C$, $T^C$ contains for each row of $T$ 
the nearest center of $C$ with respect to $T I^T$.
\end{theorem}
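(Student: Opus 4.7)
The plan is to combine the dimensionality reduction machinery with an off-the-shelf $k$-median coreset construction in a low-dimensional space. First, I would run {\sc DimensionalityReductionII} to obtain a subspace $\Pi \subseteq \REAL^d$ of dimension $O(k/\epsilon^2)$ (Lemma \ref{lem:subspace2}), and then invoke Lemma \ref{lemma:tildeB} to produce a rank-$O(k/\epsilon^2)$ matrix $\tilde{B} \in \REAL^{n \times (d+1)}$ in factored form. Since every $k$-center set $C$ sits inside a $k$-dimensional subspace, combining Corollary \ref{cor:useful} (through Remark \ref{rem:one}) with Lemma \ref{lemma:tildeB} yields, for every such $C$,
$$|\|A - A^C\|_{1,2} - \|\tilde{B} - \tilde{B}^C I\|_{1,2}| \le O(\epsilon)\,\|A - A^C\|_{1,2}.$$
This reduces the problem to building a coreset for $\tilde{B}$.

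Next, I would exploit that the first $d$ columns of $\tilde{B}$ all lie in a single subspace $U \subseteq \REAL^d$ of dimension $m = O(k/\epsilon^2)$. I fix any $k$-dimensional subspace $V \subseteq U^\perp$ and apply the $k$-median coreset construction of \cite{BFL16} (or \cite{FL11}) to the $n$ rows of $\tilde{B}$, viewed as points in the $(m+k+1)$-dimensional ambient space $(U \oplus V) \oplus \REAL$, where the last summand accommodates the special $(d{+}1)$-st coordinate. This produces a weighted subset $S$ of $s = O\!\left(\tfrac{(m+k)\,k\log k}{\epsilon^2}\right) = O\!\left(\tfrac{k^2 \log k}{\epsilon^4}\right)$ rows of $\tilde{B}$, with non-negative weights $w_1,\dots,w_s$, such that for every $k$-center set $C'$ whose first $d$ coordinates lie in $U \oplus V$ and whose $(d{+}1)$-st coordinate is $0$,
$$\Bigl|\sum_{i=1}^s w_i\,\|S_{i*} - S_{i*}^{C'}\|_2 - \|\tilde{B} - \tilde{B}^{C'} I\|_{1,2}\Bigr| \le \epsilon\,\|\tilde{B} - \tilde{B}^{C'} I\|_{1,2}.$$

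To promote this guarantee to arbitrary $C \subseteq \REAL^d$, I use a rotation argument. For any $C = \{c_1,\dots,c_k\}$, the vectors $\{c_i - P_U c_i\}$ span a subspace of $U^\perp$ of dimension at most $k$, so there is an orthogonal map $R:\REAL^d \to \REAL^d$ that is the identity on $U$ and sends that span into $V$; extending $R$ by the identity on the $(d{+}1)$-st coordinate and using that the first $d$ coordinates of every row of $\tilde{B}$ (hence of $S$) lie in $U$, one checks that the distance from any such row to $c_j I^T$ equals the distance to $(Rc_j)I^T$. Consequently both $\|\tilde{B} - \tilde{B}^C I\|_{1,2}$ and $\sum_i w_i\,\|S_{i*} - S_{i*}^C\|_2$ agree with their counterparts at $RC \subseteq U \oplus V$, so the guarantee of the previous paragraph transfers to $C$. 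Chaining this with the first bound and rescaling $\epsilon$ by an absolute constant gives the theorem's approximation guarantee. The running time is dominated by {\sc DimensionalityReductionII} and Lemma \ref{lemma:tildeB}, together $\tilde{O}(\nnz(A) + (n+d)\poly(k/\epsilon) + \exp(\poly(k/\epsilon)))$, while the \cite{BFL16} construction on $n$ points in $O(k/\epsilon^2)$ effective dimensions contributes only a lower-order $n \cdot \poly(k/\epsilon)$ term. I expect the most delicate point to be the rotation argument: the orthogonal map must act as the identity on the full $m$-dimensional subspace $U$ so that the coreset points are untouched, while simultaneously routing the orthogonal components of $C$ into the prefixed $k$-dimensional subspace $V$, which works precisely because those components span at most $k = \dim V$ dimensions.
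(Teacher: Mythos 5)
Your proposal is correct and follows essentially the same route as the paper: {\sc DimensionalityReductionII} (Lemma \ref{lem:subspace2} via Remark \ref{rem:one}), the approximate projection matrix $\tilde{B}$ of Lemma \ref{lemma:tildeB}, an off-the-shelf $k$-median coreset of \cite{FL11} or \cite{BFL16} in the $O(k/\epsilon^2)$-dimensional span augmented by $k$ extra dimensions, the rotation argument to transfer the guarantee to arbitrary centers in $\REAL^d$, and chaining the three error bounds with a final rescaling of $\epsilon$. One small correction: the guarantee for a set $C$ of $k$ centers comes from Theorem \ref{thm:one} (arbitrary sets contained in a $k$-dimensional subspace), not Corollary \ref{cor:useful} (which only covers rank-$k$ projections), though this is exactly the bound you state and use.
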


\begin{proof}
We use Lemma \ref{lem:subspace2} (plugging it into Remark \ref{rem:one}) to compute with probability at least $9/10$
in time $O(\nnz(A) + (n+d)\poly(k/\epsilon) + \exp(\poly(k/\epsilon))$ the subspace $S$
of rank $O(k/\epsilon^2)$ using parameter $\epsilon/10$. We then apply Lemma \ref{lemma:tildeB} to obtain with probability at least $4/5$ matrix $\tilde B$ of rank 
$O(k/\epsilon^2)$ from matrix $B$ defined by $S$ in time $O(\nnz(A)\log n + (n+d)\poly(k/\epsilon))$. Next we apply the coreset construction of \cite{FL11} 
or \cite{BFL16} to obtain in time $\tilde O(n \poly(k \log (1/\delta) /\epsilon))$ (recall that the dimension of our space is $\poly(k/\epsilon)$ 
a coreset $T^*$ of size $O(k^2 \log k/ \epsilon^4)$, where we set the error probability $\delta =1/10$. Finally we compute for each coreset point the corresponding coordinates 
in the original space in $O(d \poly(k/\epsilon))$ time to obtain the matrix $T$.

We get the following guarantees. Theorem \ref{thm:one} implies
$$
\big|\|\|A-A'\|_{1,2} - \|B-B'I\|_{1,2}\| \big|\le \epsilon \|A-A'\|_{1,2}.
$$                                  
By Lemma \ref{lemma:tildeB} we know that
$$
|\|B-B'I\|_{1,2} - \|\tilde{B}-(\tilde{B}I)^C\|_{1,2}| \leq \epsilon \|A-A'\|_{1,2}
$$
For the coreset $S$ we know that
$$
\big| \|\tilde B -(\tilde BI)^C\|_{1,2} - \sum_{i=1}^{|T|} w_i \cdot \|T_{i*}- T_{i*}^C\|_2 \big| \le \epsilon \|\tilde B - (\tilde BI)^C\|_{1,2}.
$$
Combining the first two statements gives
\begin{eqnarray*}
\big|\|\|A-A'\|_{1,2} - \sum_{i=1}^{|T|} w_i \cdot \|T_{i*}- T_{i*}^C \|_2 \big| &\leq & 2\epsilon \|A-A'\|_{1,2} + \epsilon \|\tilde B - (\tilde BI)^C\|_{1,2} \\
& \le & 2 \epsilon \|A-A'\|_{1,2} + \epsilon ( (1+2\epsilon) \epsilon \|A-A'\|_{1,2})\\
& \le & 5 \epsilon  \|A-A'\|_{1,2}
\end{eqnarray*}

Rescaling $\epsilon$ gives the result.
\end{proof}

\section*{Acknowledgment}
Christian Sohler acknowledges the support of the German Science Foundation (DFG) Collaborative Research Center SFB 876 "Providing Information by Resource-Constrained Analysis", project A2.
David Woodruff acknowledges support in part by an Office of Naval Research (ONR) grant N00014-18-1-2562.

We thank Lingxiao Huang for pointing out an error in the proof of Lemma 14 in an earlier version of this paper.


\bibliographystyle{plain}
\bibliography{main}
\newpage
\appendix
\section{Appendix}
\begin{proof}( of Claim \ref{claim:one})
By our assumptions we have
$$
a^p +c^p = (a^2)^{p/2} + (c^2)^{p/2} \le (a^2 + c^2)^{p/2} = b^p.
$$
Subtracting $c^p$ yields the claim. 
\end{proof}

\begin{proof}( of Claim \ref{claim:two})
We observe that for $c=0$ the result is trivial, so we may assume $c \not= 0$.
We know that $a^2+c^2= b^2$ and so $(a^2+c^2)^{p/2} = b^p$.
 We write $a= \delta c$ for some $\delta > 0$ (the case $\delta=0$ implies $a=b=c=0$ and the result follows immediately). 
For the case $\delta <1$ we obtain
\begin{eqnarray*}
(a^2 + c^2)^{p/2} & = & (\delta^2 c^2+c^2)^{p/2} \\
&= & (1+\delta^2)^{p/2} c^p\\
& \ge & (1+\delta^2)^{1/2} c^p\\
& \ge & (1 + \delta^2/3) c^p\\
& = & \frac{1}{3} \delta^{2-p} \cdot a^p + c^p\\
& \ge & \frac{1}{3} \epsilon^{\frac{2-p}{p}} \cdot a^p + c^p\\
\end{eqnarray*}
where the second inequality follows from $(1+\delta^2/3)^2 \le 1+\delta^2$. The last inequality follows from 
from $\delta^p c^p = a^p \ge \epsilon b^p \ge \epsilon c^p$ (since c>0) and hence $\delta^p \ge \epsilon$.

Plugging this into $(a^2+c^2)^{p/2} = b^p$ gives
$$
\frac{1}{3} \epsilon^{\frac{2-p}{p}} \cdot a^p + c^p \le b^p
$$
which implies the claim for $\delta <1$ by subtracting $c^p$ and multiplying with $3 \epsilon^{\frac{p-2}{p}}$.

If $1\le \delta \le 2$ we can write 
\begin{eqnarray*}
(a^2 + c^2)^{p/2} & = & (\delta^2 c^2+c^2)^{p/2} \\
&= & (1+\delta^2)^{p/2} c^p\\
& \ge & \sqrt{2} c^p\\
& \ge & \frac{2}{5} c^p + c^p\\
&\ge &  \frac{1}{10} \delta^p c^p +c^p\\
& = & \frac{1}{10} a^p +c^p\\
\end{eqnarray*}
Plugging this into $(a^2+c^2)^{p/2} = b^p$ gives
$$
\frac{1}{10} \cdot a^p + c^p \le b^p
$$
which implies the claim since $\epsilon^{\frac{p-2}{p}} \ge 1$.

Finally, if $\delta >2$ then
\begin{eqnarray*}
(a^2 + c^2)^{p/2} & = & (\delta^2 c^2+c^2)^{p/2} \\
&= & (1+\delta^2)^{p/2} c^p\\
& \ge & \delta^p c^p\\
& \ge & (1+\delta^p/2) c^p\\
& = &  \frac{1}{2} \delta^p c^p +c^p\\
& =  & \frac{1}{2} a^p +c^p\\
\end{eqnarray*}
and the claim follows analogous to the above.

\end{proof}

\begin{proof}( of Claim \ref{claim:four})
We can assume wlog. $a\ge b$ and then this is equivalent to showing $(a+x)^p - a^p \le (b+x)^p - b^p$. When $x = 0$ both sides are $0$. 
The derivative, as a function of $t$, of $f_c(t) := (c+t)^p - c^p$, for a fixed $c$, is $p/(c+t)^{1-p},$ so the derivative $df_a(t)/dt$ 
is less than or equal to $df_b(t)/dt$ for every $t \ge 0$, and by the fundamental theorem of calculus, 
$$
(a+x)^p - a^p = \int_{i = 0}^x (df_a(t)/dt ) dt \le \int_{t = 0}^x (df_b(t)/dt) dt = (b+x)^p-b^p.
$$
\end{proof}

\begin{proof}( of Claim \ref{claim:five})
Observe that $(1+\epsilon/(2p))^p 
\le 1+\epsilon$ follows from the monotonicity of the logarithm and $\ln ((1+\epsilon/(2p))^p) \le p \cdot \epsilon/(2p) \le \epsilon - \epsilon^2/2
\le \ln (1+\epsilon)$ by the Taylor expansion of $\ln (1+x)$ for $0 < \epsilon <1$ (for $\epsilon=1$ the Claim follows immediately). 
If $b \le \frac{\epsilon}{2p} \cdot a$ then $(a+b)^p \le (1+\frac{\epsilon}{2p})^p \cdot a^p \le (1+\epsilon) a^p$.
Otherwise, $b > \frac{\epsilon}{2p} \cdot a$. In this case $(a+b)^p \le (1+\frac{2p}{\epsilon})^p b^p$.
\end{proof}

\begin{proof}( of Lemma \ref{lem:numbers})
We have by the triangle inequality and $2$-norm to $1$-norm relationship, 
$$|\sqrt{a^2+b^2} - \sqrt{f^2+g^2}| = |\|(a,b)\|_2 - \|(f,g)\|_2| \leq \|(a-f, b-g)\|_2 \leq \|(a-f, b-g)\|_1 = |a-f| + |b-g|.$$
\end{proof}

\end{document}